\documentclass[sigconf, nonacm, authorversion]{aamas}

\newif\iffullpaper\fullpapertrue 

\usepackage{balance} 

\setcopyright{ifaamas}
\acmConference[AAMAS '26]{Proc.\@ of the 25th International Conference
on Autonomous Agents and Multiagent Systems (AAMAS 2026)}{May 25 -- 29, 2026}
{Paphos, Cyprus}{C.~Amato, L.~Dennis, V.~Mascardi, J.~Thangarajah (eds.)}
\copyrightyear{2026}
\acmYear{2026}
\acmDOI{}
\acmPrice{}
\acmISBN{}

\usepackage[dvipsnames]{xcolor}
\usepackage{natbib}
\usepackage{amsmath,amsfonts} 
\usepackage{bm,xspace}
\usepackage{url} 
\usepackage{listings}
\usepackage{soul}  
\usepackage[utf8]{inputenc} 
\usepackage{booktabs}  
\usepackage{macros} 
\usepackage{multirow}
\usepackage[utf8]{inputenc}

\usepackage{pdfpages}

\usepackage{tikz}
\usetikzlibrary{calc,arrows,shapes,backgrounds}
\usetikzlibrary{arrows,shapes,decorations,automata,backgrounds,positioning}
\usetikzlibrary{shadows,shadings,shapes.symbols}
\usetikzlibrary{patterns.meta}

\tikzset{
    double color fill/.code 2 args={
        \pgfdeclareverticalshading[%
        tikz@axis@top,tikz@axis@middle,tikz@axis@bottom%
        ]{diagonalfill}{100bp}{%
            color(0bp)=(tikz@axis@bottom);
            color(50bp)=(tikz@axis@bottom);
            color(50bp)=(tikz@axis@middle);
            color(50bp)=(tikz@axis@top);
            color(100bp)=(tikz@axis@top)
        }
        \tikzset{shade, left color=#1, right color=#2, shading=diagonalfill}
    }
}

\tikzstyle{player}=[state,draw,rounded rectangle,align=center]
\tikzstyle{widget}=[draw=red,rectangle, rounded rectangle=10pt,dashed,minimum size=6mm,fill=yellow]
\tikzset{every loop/.style={looseness=7}}

\usepackage{nicefrac}

\usepackage{thm-restate}

\theoremstyle{remark}
\newtheorem{remark}{Remark}

\acmSubmissionID{21 (AAAI Track)}

\title{Verification of Robust Multi-Agent Systems}
\iffullpaper 
\thanks{This is an extended version of the paper with the same title that will appear in AAMAS 2026, which contains technical appendices with proof details.}
\else
\subtitle{AAAI Track}
\fi 

\author{Raphaël Berthon}
\affiliation{
  \institution{RWTH Aachen University}
  \city{Aachen}
  \country{Germany}}
\affiliation{
  \institution{Université Paris-Saclay, CNRS, ENS Paris-Saclay}
  \city{Gif-sur-Yvette}
  \country{France}}
\email{rberthon@lmf.cnrs.fr}

\author{Joost-Pieter Katoen}
\affiliation{
  \institution{RWTH Aachen University}
  \city{Aachen}
  \country{Germany}}
\email{katoen@cs.rwth-aachen.de}

\author{Munyque Mittelmann}
\affiliation{
  \institution{LIPN, CNRS, Université Sorbonne Paris Nord}
  \city{Villetaneuse}
  \country{France}}
\email{mittelmann@lipn.univ-paris13.fr}

\author{Aniello Murano}
\affiliation{
  \institution{University of Naples Federico II}
  \city{Naples}
  \country{Italy}}
\email{aniello.murano@unina.it}

\begin{abstract}

Stochastic multi-agent systems are a central modeling framework for autonomous controllers, communication protocols, and cyber-physical infrastructures. In many such systems, however, transition probabilities are only estimated from data and may therefore be partially unknown or subject to perturbations. 
In this paper, we study the verification of robust strategies in stochastic multi-agent systems with imperfect information, in which coalitions must satisfy a temporal specification while dealing with uncertain system transitions, partial observation, and adversarial agents. 
By focusing on bounded-memory strategies, 
we introduce a robust variant of the model-checking problem for a probabilistic, observation-based extension of Alternating-time Temporal Logic. We characterize the complexity of this problem under different notions of perturbation, thereby clarifying the computational cost of robustness in stochastic multi-agent verification and supporting the use of bounded-memory strategies in uncertain environments.

\end{abstract}

\keywords{Model Checking, Multi-Agent Systems, Strategic Reasoning}

\newcommand{\BibTeX}{\rm B\kern-.05em{\sc i\kern-.025em b}\kern-.08em\TeX}

\begin{document}


\pagestyle{fancy}
\fancyhead{}


\maketitle 


\section{Introduction}
\label{sec:introduction}

AI-based technologies have been widely adopted in a variety of fields, including robotics, autonomous agents, and computer vision. 
The ubiquity 
of such technologies highlights the challenges of ensuring the trustworthiness, correctness, and robustness of computational systems. 
Formal verification
provides principled 
approaches 
to address these challenges.
In particular model checking \citep{clarke2018handbook} offers fully automated techniques 
in which systems are represented as labeled transition models, and their correct behaviors are specified by temporal logic formulas.  

Multi-Agent Systems (MAS) are systems composed of multiple autonomous components and interacting in a shared environment \citep{wool2009-MAS-book}. 
The evolution of a MAS is determined by the behavior of the agents.
Alternating-time Temporal Logic  (\ATL)  ~\citep{AlurHK02} is a prominent formal method to reason about strategies in MAS. Besides the traditional temporal modalities, \ATL contains strategic modalities to express 
how coalitions of agents can cooperate or compete to achieve their objectives.
%
Several aspects of MAS are inherently uncertain, due to both random events and the unpredictable behavior of agents. Such uncertainty can be quantified from experiments or past observations and captured by stochastic models, such as Markov decision processes (MDPs) and stochastic MAS. Probabilistic \ATL (\PATL) \citep{chen2007probabilistic} extends \ATL to the probabilistic setting, allowing reasoning about randomized strategic abilities of agents interacting
in a system with stochastic transitions. On the other hand, uncertainty in MAS may also originate from agents’ partial observability
of the system. In this imperfect-information setting, model checking strategic abilities with perfect recall is undecidable, even for deterministic MAS \citep{DBLP:journals/corr/abs-1102-4225}, which motivates a careful choice of restricted yet expressive classes of strategies \cite{Berthon21,journals-BLMR20}. 

\ignore{
Several aspects of MAS, such as the unpredictable behavior of agents and the occurrence of random phenomena, are uncertain.  These aspects can be measured based on experiments or past observations and represented with stochastic models, such as Markov decision processes (MDP) and stochastic MAS. Probabilistic \ATL (\PATL) \citep{chen2007probabilistic} extends \ATL to the probabilistic setting, allowing reasoning about randomized strategic abilities of agents interacting in a system with stochastic transitions.
On the other hand, uncertainty in MAS may also originate from agents' partial observability of the system.   
Model checking strategic abilities of agents with perfect recall under imperfect information entails undecidability, even when restricted to deterministic MAS \citep{DBLP:journals/corr/abs-1102-4225}. 
}

 Although the frequency of random events can be measured,  the precise probabilities are often unknown, and the system may face perturbations. For example, different weather centers often provide different precipitation probabilities for the same region and time. 
Another example is model-based reinforcement learning, where agents estimate the agent-environment interaction model (e.g., an MDP \citep{moerland2023model}) from data. 
Since the model is learned from their interaction with the environment, its transitions are susceptible to  
errors. 
Strategic behavior in such settings is difficult, as it requires dealing with uncertainty in the system transitions while interacting with other agents, who may be cooperative or adversarial. 

 

 In this paper, we investigate verification methods for \textit{robust strategies}, i.e., strategies ensuring a temporal specification despite perturbations in transition probabilities. We also consider whether \textit{coalitional} strategies exist that satisfy $\PATL$-like specifications. Instead of fixed transition probabilities after joint actions, we consider stochastic MAS with probability ranges within a given perturbation bound. The resulting setting captures both coalitional strategic uncertainty and uncertainty on exact transition probabilities.
 
To the best of our knowledge, this is the first work on stochastic MAS with ATL objectives where the transition probabilities are not fixed. This direction was not explored, including in the perfect information case. 
We consider three definitions of robustness. The first, using $\varepsilon$-perturbations, 
follows the classical setting in which every transition probability in the stochastic system may deviate by at most $\varepsilon$ from a nominal value ~\citep{DBLP:journals/ipl/ChenHK13}.
The other two notions are more involved and based on parametric systems, where transition probabilities are represented as rational functions over a finite set of \emph{parameters}.
We study both the case of a fixed number of parameters and the case of an arbitrary (unbounded) number of parameters. This separation is crucial because the arbitrary parameter case typically leads to significant jumps in complexity. 

In our setting, agents' strategies depend on their observations of the system, which captures imperfect information. Although recent work has provided decidable model-checking results for \PATL under imperfect information and memoryless strategies \citep{BelardinelliJMM23,BelardinelliJMM24}, we also allow for strategies with bounded memory.  
We consider strategies based on automata, which are widely used in verification~\citep{DBLP:books/daglib/0020348} and offer more expressivity in relation to standard bounded memory considered in ATL. We illustrate formally this improvement in Proposition~\ref{prop:strat_gen}, where we show that our automata-based strategies strictly generalize classical $k$-recall strategies.
At the same time, 
this additional expressiveness does not increase the worst-case complexity of model checking in our setting.

This setting is interesting from a practical point of view, as full memory is not always feasible. In fact, in real-world applications, such as economics, robotics, or AI, agents 
are often limited by their computational resources, processing power, and time constraints. Analogous limitations may occur with humans who interact with the system, who can only retain and process a finite amount of information \citep{natStrategy, jamroga2020natural}.  
Observation-based strategies with bounded memory allow handling uncertainty while being computationally feasible and avoiding the extremes of either oversimplification (memoryless strategies) or impracticality (unbounded memory). Having a richer representation for memory is particularly important under partial observation, where bounded memory is one of the few ways to obtain decidability \citep{CLMM18}. 
Therefore, bounded memory strikes a balance between flexibility and efficiency, 
which is important in practical scenarios. 

\begin{table}[h] 
\centering\noindent%
{\small
\begin{tabular}{@{}llll@{}}
\hline
  & Universal
  reachability  & Observation-based  \\ 
    &   (memoryful)&  \PATL  (bounded-memory) \\ \hline
 $\varepsilon$-perturb.& in $\PTIME$ (Thm.~\ref{prop:check-reach-eps}$^{*}$)~\citep{DBLP:journals/ipl/ChenHK13} & in $\Sigma_2^{\PTIME}$ (Thm.~\ref{prop:rob-patl}$^{**}$) \\
 Fix param. & in $\NPInter$ (Thm.~\ref{prop:check-reach-fixed}$^{*}$) & in $\Sigma_2^{\PTIME}$ (Thm.~\ref{prop:rob-patl-bound}$^{**}$) \\
 Unbounded  & in $\forall\Reals$ (Thm.~\ref{prop:check-reach-gen}$^{*}$) & in $\Sigma_3^{\mathbb{R}}$ (Thm.~\ref{prop:rob-patl-gen}$^{*}$) \\ \hline 
\end{tabular}
}  
\caption{Summary of the results. 
Results labeled with $^*$ allow randomized strategies. Results labeled with $^{**}$ consider deterministic strategies for the coalition but allow randomized strategies outside the coalition. 
}  
\label{tab:mcheck-complexity}
\vspace{-0.5cm}
\end{table}

 \paragraph{Contribution. } 
  This paper provides complexity results for obtaining robust strategies with bounded memory in uncertain stochastic MAS. We show our definition of bounded memory is strictly more general than recalling the $b$ last states visited.  
We consider the logic $\PATL$  for reasoning about observation-based strategies with bounded memory. 
We then define a variant of the model-checking problem that allows for possible perturbations and determine its complexity for specifications in $\PATL$. 
Our approach has two main advantages. 
 On one side, bounded memory is a trade-off between computational cost and retaining information. 
 On the other hand, we can guarantee that strategies are resilient to (bounded)  perturbations of the models. We provide novel results with reasonable complexity for model checking, going from ``simple'' reachability objectives to more complex ones describing coalitional behavior. 
 Whereas our reachability results (Theorems~\ref{prop:check-reach-eps},~\ref{prop:check-reach-fixed},~\ref{prop:check-reach-gen}) are close to the existing literature, they serve as essential building blocks for our PATL results (Theorems~\ref{prop:rob-patl},~\ref{prop:rob-patl-bound},~\ref{prop:rob-patl-gen}). These latter results and their proofs are entirely new and technically demanding because they require combining MAS and parametric model checking.  \iffullpaper \else Omitted proofs and technical details can be found in \textcolor{red}{CITE}. \fi
 
 Instead of representing the memory of strategies with the $n$ last states visited, we use automata with $n$ states,  increasing the strategic capabilities. While randomized strategies are even more expressive, we obtain better complexity results for deterministic strategies.  We consider different kinds of perturbations: perturbations represented by parameters that can be common to different actions and thus model dependencies, and two subcases: (i) a fixed number of parameters, and (ii)  possible perturbations that can take any value within an interval of $\varepsilon$. Table \ref{tab:mcheck-complexity} summarizes our main results. We mainly focus on membership results. Some hardness results may be difficult to obtain even for basic cases. In particular, universal reachability with unbounded parameters is the dual of existential reachability for which $\exists\Reals$-hardness is a known open problem~\citep{DBLP:journals/jcss/JungesK0W21}. Going further to $\Sigma_3^{\mathbb{R}}$, hardness results in the hierarchy of the reals is known to be particularly challenging~\cite{DBLP:journals/mst/SchaeferS24}. 
 \iffullpaper We also discuss how different approaches fail at giving a $\Sigma_2^{\PTIME}$-hardness result in the appendix. 
 \else 
 We also discuss how different approaches fail at giving a $\Sigma_2^{\PTIME}$-hardness result in \textcolor{red}{CITE}. 
 \fi 
 Concerning the case with a fixed number of parameters, no $\NPInter$-hard problem is known, and the existence of such a hard problem would imply $\PTIME\neq \NP$. 

\section{Related work} 
\label{sec:rw}
Our work is related to the research on probabilistic logics for MAS, interval and parametric Markov Decision Processes (MDPs), and synthesis of robust strategies. 

\paragraph{Verification of Stochastic MAS}
The verification of stochastic MAS against specifications given in probabilistic logics has been widely studied, including with specifications in probabilistic \ATL (\PATL) \citep{chen2007probabilistic},   Probabilistic Alternating-Time $\mu$-Calculus \citep{song2019probabilistic} and Probabilistic Strategy Logic (\PSL) \citep{aminof2019probabilistic}.
Verification of concurrent stochastic games has also been implemented using the PRISM model checker \citep{Kwiatkowska2022,kwiatkowska2022probabilistic}. 
Recent work studied the model checking problem for \PATL under imperfect information and memoryless strategies for the proponent coalition, for both deterministic  \citep{BelardinelliJMM23} and mixed strategies \citep{BelardinelliJMM24}. 
Finally,  \citep{BerthonKMM24}  considered a variant of \PATL with probabilistic natural strategies.
While natural strategies also capture bounded memory, directly applying the methods proposed in this paper would require representing them as automata, leading to an exponential blow-up in their representation. 

\paragraph{Interval MDPs.}
Interval Markov chains (IMCs) generalize Markov chains with interval-valued transition probabilities: strategies must hold for any system whose transition set is within the interval.  
Similar to our approach, IMCs are a   modeling tool for probabilistic systems with uncertainty  of the exact transition probabilities \citep{hahn2019interval}.
The model checking of formulas in Linear Temporal Logic (\LTL) over IMCs is in \EXPSPACE and \PSPACE-hard~\citep{DBLP:conf/tacas/BenediktLW13}.
In the generalization to Interval MDPs (IMDPs), model checking 
Probabilistic Computation Tree Logic (\PCTL) has been proved polynomial~\citep{DBLP:journals/ipl/ChenHK13} by using the ellipsoid method. However, this approach cannot be easily and directly generalized to observation-based strategies. 
The problem of finding robust randomized strategies for multiple objectives 
in IMDPs was shown to be \PSPACE-hard \citep{hahn2019interval}. 
Another approach for the verification of \PCTL properties in uncertain MDPs considers convex uncertainty sets as a generalization of intervals \citep{PuggelliLSS13}.
Similarly, ~\citep{ijcai2024p741} consider MDPs where each transition has an uncertainty set, with mean-payoff objectives, and gives an $\NPInter$ membership result.

\paragraph{Parametric MDPs.}
A way to generalize uncertainty and robustness is by considering parameters. In particular, transition probabilities can be represented as equations over a given set of parameters. While IMCs consider intervals of transition probabilities, 
the transition probabilities of Parametric MCs are given by polynomials with rational coefficients over a fixed set of real-valued parameters. This generalizes IMCs, as dependencies between transition probabilities can occur. 
The model checking problem for
parametric Markov chains with specifications in \PCTL was studied in \citep{baier2020parametric}. 
In that work, the authors have shown that, in the univariate case,  the existential \PCTL model checking problem is \NP-complete, and identified fragments of \PCTL where model checking is solvable in polynomial time. 
On parametric MDPs, instead, even the simplest cases become challenging~\citep{DBLP:journals/jcss/JungesK0W21}: with a fixed number of parameters, any reachability objective can be done in $\NP$, but with an arbitrary number of parameters, most problems become $\exists\Reals$-complete. 
Our separation between fixed and unbounded perturbations takes its root in these results. 
A related problem is to compute an instantiation of the unspecified parameters in a parametric MDP such that the resulting MDP satisfies a given temporal logic specification. This problem was studied in \citep{CubuktepeJJKT22} using convex optimization techniques. 

In general, the problem of separating satisfying from violating regions is hard on parametric MDPs, since these are bounded by non-linear functions~\cite{DBLP:journals/fmsd/JungesAHJKQV24}.
Parametric MDPs have been used to synthesize strategies in a two-player partially observable stochastic game framework \citep{abs-1810-00092}. Such a framework generalizes partially observable MDPs (POMDP) to two agents, where only one of them  has full observability 
of the system.
The synthesis approach involves a reduction to a series of synthesis problems for parametric MDPs combined with mixed-integer linear
programming.  

\paragraph{Robust Strategy Synthesis.} 
The synthesis of robust memoryless policies for uncertain POMDPs with reachability and expected cost objectives has been investigated in \citep{Suilen0CT20}. Similar to IMDPs, uncertain POMDPs consider a set of probability 
intervals. 
The proposed solution uses convex optimization and the authors have shown that the problem is \NP-hard for memoryless behavioral strategies. 
This work was extended with finite-state policies, which maintain the same lower bound complexity  \citep{cubuktepe2021robust}.
The problem of synthesizing robust strategies for   MDPs with ellipsoidal uncertainty
 has been studied alongside  \PCTL specifications and applied to compute risk-limiting strategies for energy pricing \citep{PuggelliSS14}. 
Experimental evaluations show promising results on s-rectangular robust MDPs~\cite{DBLP:journals/mor/WiesemannKR13,DBLP:conf/icml/KumarWLM24}, but it is still too early to reasonably implement these techniques for multi-agent systems.

The robustness of MAS has also been investigated from other perspectives.  
Robust \ATL \citep{MuranoNZ23} is an 
extension of \ATL to express the ability of a coalition of agents to tolerate the violation of an environment assumption in a purely deterministic setting. The environment assumption is described as a temporal logic property. A violation means the assumption's truth value may be ``false'' in some moments of the system execution. Recently, this notion was extended to the probabilistic setting \citep{zimmermann2025robust}. 
While we consider verification under model perturbations,  \citet{yu2024} has studied the case in which the agent's actions may fail with a certain probability. 




\section{Preliminaries}
\label{sec:preliminars}

In this paper, we fix finite non-empty sets of agents $\Ag$, actions $\Act$,
and atomic propositions $\APf$. 
We write $\profile{\act}$ for a tuple of actions $(\act_\ag)_{\ag\in\Ag}$, one for each agent, and such tuples are called \emph{action profiles}.
Given an action profile $\profile{\act}$ and $\coalition\subseteq\Ag$, we let $\act_\coalition$ be the actions of agents in  $\coalition$, and $\profile{\act}_{-\coalition}$ be $(\act_\agb)_{\agb\not \in \coalition}$. Similarly, we let $\Ag_{-\coalition}=\Ag\setminus\coalition$. We denote by $\mathbb{N}_+$ the set $\{1,2,\ldots\}$.




\paragraph{Distributions. } Let $X$ be a finite non-empty set. A \emph{(probability) distribution} over $X$ is a function $\distribution:X \to [0,1]$ such that $\sum_{x \in X} \distribution(x) = 1$. Let $\Dist(X)$ be the set of distributions over $X$. We write $x \in \distribution$ for $\distribution(x) > 0$. 
If $\distribution(x) = 1$ for some element $x \in X$, then $\distribution$ is a \emph{point (a.k.a. Dirac) distribution}. 
If, for $i\in I$, $\distribution_i$ is a distribution over $X_i$, then, writing $X = \prod_{i\in I} X_i$, the \emph{product distribution} of the $\distribution_i$ is the distribution $\distribution:X \to [0,1]$ defined by $\distribution(x) = \prod_{i\in I} \distribution_i(x_i)$.

\paragraph{Stochastic Systems. } 
  A \emph{stochastic multi-agent system} (or simply \emph{system})
  $\System$ is a tuple 
  $ ( \setpos, \legal, 
  \trans, \vval,(\funobs_{\ag})_{ \ag\in \Ag})$ where 
     (i) $\setpos$ is a finite non-empty set of \emph{states};
      (ii) $\legal: \setpos \times \Ag \to 2^\Act\setminus\{\emptyset\}$ is a \emph{legality function} defining the available actions for each agent in each state, we write $\profile{\legal(\pos)}$ for the tuple $(\legal(\pos, \ag))_{\ag\in\Ag}$; 
      (iii) for each state $\pos \in \setpos$ and each  action $\mov \in \profile{\legal(\pos)}$, the \emph{stochastic transition function} $\trans$ gives the 
      probability $\trans(\pos, \mov)(s')$ of a transition from state $\pos$ for all $\pos' \in \setpos$ if each player $\ag \in \Ag$ plays the action $\mova$, and remark that $\trans(\pos, \mov)\in\Dist(\setpos)$; and
      (iv) $\vval:\setpos \to 2^{\APf}$ is a \emph{labeling function}.
      (v) for each $\ag\in \Ag$, $\funobs_{\ag}$ is an observation function $\funobs:\setpos\to O_{\ag}$ for some observation set $O_{\ag}$. 

For each state $\pos \in \setpos$ and joint action $\mov \in \prod_{\ag \in \Ag} \legal(\pos,\ag)$, we assume 
  that there is a state $\pos'\in\setpos$ such that $\trans(\pos, \mov)(\pos')$ is non-zero, that is every state has a successor state from a legal action, hence $\mov\in \legal(\pos,\ag)$. 

\begin{example}[River] 
\label{ex:river}

Let us consider a system $\CGS_{river}$ with two companies 
sharing the usage of a river. 
At every step, each company has two available actions: discharge wastewater directly into the river (action $d$) or treat it before discharging it into the river (action $t$).
Atomic propositions state whether the river's water quality has reached low  (proposition $low$) or high levels (proposition $high$). The system is shown in Figure \ref{fig:river}.  The propositions $low$ and $high$ are true only in state $q_{low}$ and $q_{high}$, resp.

 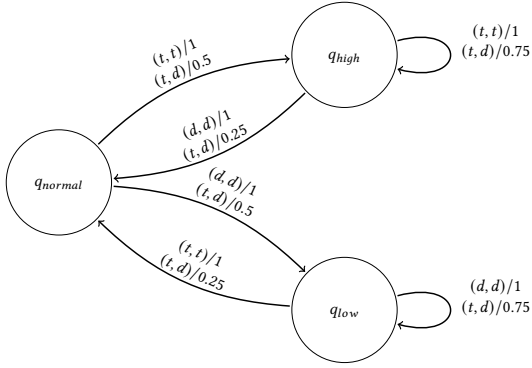
\begin{figure}[ht] 
     \centering
  
     \scalebox{.7}{
     \begin{tikzpicture}[shorten >= 1pt, shorten <= 1pt,  auto]
  \tikzstyle{rond}=[circle,draw=black,minimum size  = 2cm] 
 \tikzstyle{wina}=[fill=green!20]
  \tikzstyle{winb}=[fill=red!10] 
   \tikzstyle{label}=[sloped,shift={(0,-.1cm)}]
  
  \node[rond,fill=White] (q0) { \begin{tabular}{c} $q_{normal}$  \end{tabular}};
  \node[rond,fill=White] (q1) [ above right= 1cm and 4 cm  of q0] {\begin{tabular}{c} $q_{high}$   \end{tabular}}; 
  \node[rond,fill=White] (q2) [ below right= 1cm and 4 cm  of q0] {\begin{tabular}{c} $q_{low}$  \end{tabular}};

 \path[->,thick,bend left= 20] (q0) edge  node [label,pos=.5]    {{ 
 \begin{tabular}{c} $(t,t)/1$ \\  $(t,d)/0.5 $  \end{tabular}
}}    (q1);

 \path[->,thick,bend left= 20] (q0) edge  node [label,pos=.5]    {{ 
 \begin{tabular}{c} $(d,d)/1$ \\  $(t,d)/0.5 $  \end{tabular}
}}    (q2);

 \path[->,thick,bend left  = 20] (q1) edge  node [label,pos=.5]    {{ 
 \begin{tabular}{c} $(d,d)/1$ \\  $(t,d)/0.25 $  \end{tabular}
}}    (q0);

 \path[->,thick,bend left= 20] (q2) edge  node [label,pos=.5]    {{ 
 \begin{tabular}{c} $(t,t)/1$ \\  $(t,d)/0.25 $  \end{tabular}
}}    (q0);



\path [->,thick] (q1) edge[loop right]  node  [pos=.3]   {{ \begin{tabular}{c} $(t,t)/1$ \\  $(t,d)/0.75 $  \end{tabular}}}  (q1);


\path [->,thick] (q2) edge[loop right]  node  [pos=.3]   {{ \begin{tabular}{c} $(d,d)/1$ \\  $(t,d)/0.75 $  \end{tabular} }}  (q2);

\end{tikzpicture}
}
     \caption{System $\CGS_{river}$ representing the interaction between two companies. 
     The transition function is written in terms of labels on the arrows. Each transition from state $q$ to $q'$ is annotated with one or more labels of the form $(x, y)/z$ where $(x, y)$ denote the joint action of the companies in state $q$, and z denotes the probability of arriving at the state $q'$. The transition probabilities for the joint action $(d,t)$ are the same as for $(t,d)$, and are thus omitted. 
     } 
      \label{fig:river}
 \end{figure}

In state $q_{normal}$, no proposition is true, representing that the water quality is normal. If both companies discharge the wastewater, the water quality is guaranteed to decrease (from high to normal, and normal to low). 
Similarly, if both companies treat the water, the quality will increase (from low to normal, and normal to high). When only one company treats the water while the other discharges the wastewater, the effect is not deterministic. If the level was normal, the quality may increase or decrease with a probability of 50\%. If the quality is high (similarly, low), the quality may be maintained with a probability of 75\%  or decrease (resp. increase) with a probability of 25\%.

\end{example}

\paragraph{Markov Models. } A \emph{partially observable Markov decision process} (POMDP) is a one-player stochastic system, given with an observation function for every agent. When all observation functions are injective, the system is fully observable and we have a \emph{Markov decision process} (MDP) where we omit the observations. 

A \emph{Markov chain} is an MDP with $|\legal(\pos)| = 1$ for every $\pos\in\setpos$. We often represent a Markov chain $M$ as a tuple $(\setpos,p)$ where $\setpos$ is a countable non-empty set of states and $p \in \setpos\to\Dist(\setpos)$ the transition matrix. 

\paragraph{Plays. } 
A \emph{play} or path in a  system $\System$ is an infinite sequence $\iplay=\pos_0 \pos_1 \cdots\in\setpos^{\omega}$ of states
such that there exists a sequence $\mov_0 \mov_1 \cdots$ of joint-actions such that $\mov_i \in \legal(\pos_{i})$ and  $\pos_{i+1} \in \trans(\pos_i,\mov_i)$ (\ie, $\trans(\pos_i,\mov_i)(\pos_{i+1} )>0$) for every $i \geq 0$.
We write $\iplay_i$ for $\pos_i$, 
$\iplay_{\geq i}$ for the suffix of
$\iplay$ starting at position $i$. 
Finite paths are called \emph{histories}, and the set of all histories is denoted $\History$. We write $\last(\history)$ for the last state of a history $\history$ and $len(\history)$ for the size of $\history$. 
We extend observation functions $\funobs:\setpos\to O_{\ag}$ to paths with $\funobs:\setpos^{\omega}\to O_{\ag}^{\omega}$.



\section{General Strategies}

In what follows, we introduce a general class of observation-based strategies, both for infinite and bounded memory. 

\begin{definition}[General Bounded-Memory Strategies.]  
Let $\System= ( \setpos, \legal, \allowbreak \trans,  \vval, (\funobs_{\ag})_{ \ag\in \Ag})$ be a stochastic system with set of actions $\Act$ and set of observations $O$. In the most general case strategies for agent $\ag$ are unbounded functions $\sigma:\funobs_{\ag}(\History)\rightarrow\distribution(\Act)$. 
\end{definition}

We write $\setstrat_{\ag}$ for the set of unbounded strategies for agent $a$. 
We use Definition $10.97$ of~\citep{DBLP:books/daglib/0020348} and extend it as in~\citep{DBLP:journals/corr/abs-1006-1404} with randomization and partial observations. Strategy $\sigma$ is a \emph{general randomized strategy} with \emph{bounded recall} if it can be represented as a scheduler $(Q,act,\Delta,start)$ 
where $Q$ is a possibly infinite set of modes (or memory states),  
$\Delta: Q\times O \to \Dist(Q)$ is a randomized transition function, 
$act: Q\times O \to \Dist(\Act)$ randomly selects the next action,
and $start: O \to \Dist(Q)$ randomly selects a starting mode for the observation $o$. 
As in~\citep{DBLP:journals/corr/abs-1006-1404}, a strategy has infinite memory if $b = \infty$, finite memory $b\in \mathbb{N}_+$ if $|Q| =b$ and is \emph{memoryless} if $|Q| = 1$, it is \emph{behavioral} if $act$ is randomized but all other distributions are singletons, it is \emph{mixed} if $start$ is randomized but all other distributions are singletons, and it is \emph{pure} (or \emph{deterministic}) if all distributions are singletons. 
We call a strategy \emph{observation-based} if $\funobs$ isn't injective. We show in \iffullpaper the appendix \else  \textcolor{red}{CITE} \fi
that this captures other definitions, such as in~\cite{BelardinelliJMM23,BelardinelliJMM24,Berthon21}. 
Our definition of  $\compvar$-bounded recall strategies uses a finite-state scheduler. This is more general than recalling the last $b$ states, as we show next (see Prop. ~\ref{prop:strat_gen}). 


First, we point out differences in our representation of MAS with Imperfect Information in relation to the approach used in \citet{BelardinelliJMM23,BelardinelliJMM24}. 
While we define stochastic systems 
with imperfect information as  $\System= ( \setpos, \legal, \trans,  \allowbreak \vval,(\funobs_{\ag})_{ \ag\in \Ag})$,
they 
instead consider $\System= ( \setpos, \legal, \trans,  \allowbreak \vval,(\sim_{\ag})_{ \ag\in \Ag})$ where $\sim_{\ag}$ is an equivalence relation of $\setpos\times\setpos$. These two definitions of imperfect information are equivalent, since for $\pos,\pos'\in\setpos$, we can take $\funobs_{\ag}(\pos) = \funobs_{\ag}(\pos')$ iff $\pos\sim_{\ag}\pos'$. However, our definitions make it simpler to define our observation-based strategies using schedulers, in particular, functions 
$\Delta: Q\times O \to \Dist(Q)$, 
$act: Q\times O \to \Dist(\Act)$ 
and $start: O \to \Dist(Q)$. 

Our definition of finite-memory strategies differs from the usual one in logics for strategic reasoning, as in \cite{Berthon21}, which considers \emph{$k$-sequential strategies}, defined as $\sigma:\funobs_{\ag}(O^k)\rightarrow\distribution(\Act)$ for $k\in\mathbb{N}$. Our scheduler-based finite-memory strategies are strictly more general than sequential strategies, formally:
\begin{restatable}{proposition}{propstratgen}\label{prop:strat_gen}
The following holds: 
    \begin{itemize}
        \item For every $k$-sequential strategy, there exists a scheduler-based finite-memory strategy playing the same actions given the same history.
        \item There exists an MDP $\System_g$ and a \LTL formula $\varphi_g$ such that some scheduler-based finite-memory strategy satisfies $\varphi_g$ on $\System_g$ with probability $1$, but no $k$-sequential strategy satisfies it, regardless of $k$.  
        \end{itemize}
\end{restatable}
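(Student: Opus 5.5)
For the first item I will build, from a $k$-sequential strategy $\sigma:\funobs_{\ag}(O^k)\rightarrow\Dist(\Act)$, a deterministic-update scheduler whose modes store the last observations seen. I take $Q=\bigcup_{j\le k} O^{j}$, the sequences of at most $k$ observations; this is finite since $O$ is finite. The scheduler components are:
\begin{itemize}
\item $start(o)$ is the Dirac distribution on the one-letter sequence $o$;
\item $\Delta(w,o)$ is the Dirac distribution on the word obtained by appending $o$ to $w$ and keeping only the last $k$ letters;
\item $act(w,o)=\sigma(\cdot)$ evaluated on the suffix of length at most $k$ of $w\cdot o$.
\end{itemize}
For histories shorter than $k$ the stored word is the whole observation history, matching however $\sigma$ is read on short histories.

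A straightforward induction on the length of the history $\history$ then shows the following invariant. After reading $\funobs_{\ag}(\history)$, the current mode is exactly the window of the last (at most $k$) observations. So $act$ returns the same distribution as $\sigma$ on every history, and the two strategies induce the same behaviour. The care needed is only in aligning the convention for whether the current observation belongs to the mode or is supplied separately to $act$. I will fix the convention that the mode holds the previous observations and the current one is the argument $o$.

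For the second item I will use a fully observable MDP (so observations are states) where a single counter-like bit of memory, kept for an unbounded time, is needed:
\begin{itemize}
\item From the initial state $s_0$ a random move goes to $a$ or $b$ with probability $1/2$ each. Propositions $p_a$ and $p_b$ label $a$ and $b$.
\item Both then go deterministically to a waiting state $w$ with a self-loop action $\mathit{stay}$ and an exit action $\mathit{go}$. Reading $w$ as an observation does not reveal where the play came from.
\item From $w$, $\mathit{go}$ leads to a choice state $c$ where the agent picks $\mathit{left}$ (to $\ell$) or $\mathit{right}$ (to $r$), both absorbing.
\end{itemize}
A single $\mathit{go}$ from $w$ would already let a $2$-sequential strategy win. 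To defeat every $k$ I make the waiting phase unbounded: $w$ has only one action, which loops with probability $1/2$ and moves to $c$ with probability $1/2$. Take
\[
\varphi_g=(\mathbf{F}\,p_a\rightarrow\mathbf{F}\,p_\ell)\wedge(\mathbf{F}\,p_b\rightarrow\mathbf{F}\,p_r).
\]
The $2$-mode scheduler remembers $a$ versus $b$ (mode update on reading $a$ or $b$, unchanged on $w$) and plays accordingly at $c$, so it wins with probability $1$.

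Against any $k$-sequential strategy, with positive probability $2^{-k}$ the loop at $w$ is taken at least $k$ times. On these plays the window at $c$ is $w^{k-1}c$ regardless of the origin, so the strategy plays the same distribution in both branches. Conditioned on this event it therefore fails with probability at least $1/2$, and the overall success probability is below $1$.

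The main obstacle I expect is making this indistinguishability argument fully precise for randomized sequential strategies. The histories through $a$ and $b$ with long waits have equal probability and identical $k$-windows, so whatever mixture is played at $c$, one of the two conjuncts fails with conditional probability $\ge 1/2$. This yields a failure probability of at least $2^{-k-1}>0$, independent of how $\sigma$ behaves elsewhere.
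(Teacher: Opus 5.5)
Your proposal is correct and follows the paper's route. For the first item you use a sliding-window scheduler; for the second you use the same MDP shape (a fair random branch, a waiting state that self-loops with probability $1/2$, then a choice state), where after a long wait the $k$-window at the choice state no longer reveals the branch, and your simpler formula and explicit $2^{-k-1}$ failure bound for randomized sequential strategies make the paper's informal last step precise. The one slip is the off-by-one you anticipated: $start(o)$ already stores $o$ and $act(w,o)$ appends $o$ again, so an observation is read twice; either take $start(o)$ to be the empty word (under your stated convention) or keep $start$ and $\Delta$ and set $act(w,o)=\sigma(w)$ (under the usual convention where $\Delta$ has already absorbed the current observation).
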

\begin{proof}[Proof Sketch]
For the second point, consider the system $\System_g$ in Figure~\ref{fig:mdp_gen} and the formula $\varphi_g = \G ((blue\Rightarrow \G \F blue) \wedge (red\Rightarrow \G \F red))$.
After any history on $\System_g$, the scheduler of Figure~\ref{fig:control_gen} is in $q_b$ (resp. $q_r$) iff $q_1$ (resp. $q_2$) was visited, and then plays the matching action, ensuring $\varphi_g$ almost surely.
In contrast, any $k$-sequential strategy must, after $k$ steps in $q_3$, play the same action in $q_4$ regardless of whether $q_1$ or $q_2$ occurred, so $\varphi_g$ cannot hold almost surely.
\iffullpaper
The full proof is in in Appendix~\ref{appendix:strat_gen}.
\else
The full proof is in in \textcolor{red}{CITE}.
\fi
\end{proof}


 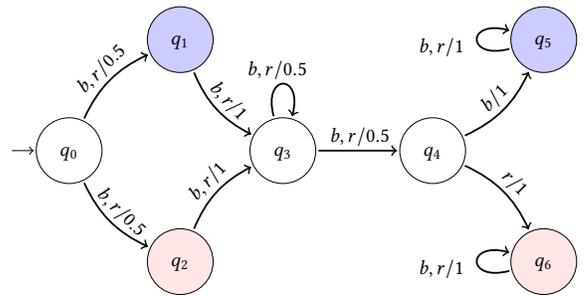
\begin{figure}[ht] 
     \centering
  
     \scalebox{.85}{
    \begin{tikzpicture}[shorten >= 1pt, shorten <= 1pt,  auto]
  \tikzstyle{rond}=[circle,draw=black,minimum size  = 1cm] 
 \tikzstyle{wina}=[fill=blue!20]
  \tikzstyle{winb}=[fill=red!10] 
   \tikzstyle{label}=[sloped,shift={(0,-.1cm)}]
  
  \node[rond,fill=White,initial,initial text=] (q0) { \begin{tabular}{c} $q_{0}$  \end{tabular}};
  \node[rond,wina] (q1) [ above right= 1cm and 1 cm  of q0] {\begin{tabular}{c} $q_{1}$   \end{tabular}}; 
  \node[rond,winb] (q2) [ below right= 1cm and 1 cm  of q0] {\begin{tabular}{c} $q_{2}$  \end{tabular}};
  \node[rond,fill=White] (q3) [ right= 2.3 cm  of q0]  { \begin{tabular}{c} $q_{3}$  \end{tabular}};
  \node[rond,fill=White] (q4) [ right= 1.3 cm  of q3]  { \begin{tabular}{c} $q_{4}$  \end{tabular}};
  \node[rond,wina] (q5) [ above right= 1cm and 1 cm  of q4] {\begin{tabular}{c} $q_{5}$   \end{tabular}}; 
  \node[rond,winb] (q6) [ below right= 1cm and 1 cm  of q4] {\begin{tabular}{c} $q_{6}$  \end{tabular}};

 \path[->,thick,bend left= 20] (q0) edge  node [label,pos=.5]    {{ 
 \begin{tabular}{c} $b,r/0.5$  \end{tabular}
}}    (q1);

 \path[->,thick,bend right= 20] (q0) edge  node [label,pos=.5]    {{ 
 \begin{tabular}{c} $b,r/0.5 $  \end{tabular}
}}    (q2);

 \path[->,thick,bend right= 20] (q1) edge  node [label,pos=.5]    {{ 
 \begin{tabular}{c} $b,r/1 $  \end{tabular}
}}    (q3);

 \path[->,thick,bend left= 20] (q2) edge  node [label,pos=.5]    {{ 
 \begin{tabular}{c} $b,r/1 $  \end{tabular}
}}    (q3);

\path [->,thick] (q3) edge[loop above]  node  [pos=.3]   {{ \begin{tabular}{c} $b,r/0.5$  \end{tabular}}}  (q3);

 \path[->,thick] (q3) edge  node [label,pos=.5]    {{ 
 \begin{tabular}{c} $b,r/0.5 $  \end{tabular}
}}    (q4);

 \path[->,thick,bend right= 20] (q4) edge  node [label,pos=.5]    {{ 
 \begin{tabular}{c} $b/1 $  \end{tabular}
}}    (q5);

 \path[->,thick,bend left= 20] (q4) edge  node [label,pos=.5]    {{ 
 \begin{tabular}{c} $r/1 $  \end{tabular}
}}    (q6);

\path [->,thick] (q5) edge[loop left]  node  [pos=.3]   {{ \begin{tabular}{c} $b,r/1$  \end{tabular}}}  (q5);

\path [->,thick] (q6) edge[loop left]  node  [pos=.3]   {{ \begin{tabular}{c} $b,r/1$  \end{tabular}}}  (q6);

\end{tikzpicture}
}
     \caption{An MDP $\System_g$ where scheduler-based strategies are more general than sequential strategies. 
    } 
      \label{fig:mdp_gen}
 \end{figure}

 \begin{figure}[ht] 
     \centering
  
     \scalebox{1.1}{
    \begin{tikzpicture}[shorten >= 1pt, shorten <= 1pt,  auto]
  \tikzstyle{rond}=[circle,draw=black,minimum size  = 1cm] 
 \tikzstyle{wina}=[fill=blue!20]
  \tikzstyle{winb}=[fill=red!10] 
   \tikzstyle{label}=[sloped,shift={(0,-.1cm)}]
  
  \node[rond,fill=White,initial,initial text=] (q0) { \begin{tabular}{c} $q_{b}$  \end{tabular}};
  \node[rond,fill=White] (q1) [right= 2cm of q0] {\begin{tabular}{c} $q_{r}$  \end{tabular}};

 \path[->,thick,bend left= 20] (q0) edge  node [label,pos=.5]    {{ 
 \begin{tabular}{c} $q_2/1$  \end{tabular}
}}    (q1);

 \path[->,thick,bend left= 20] (q1) edge  node [label,pos=.5]    {{ 
 \begin{tabular}{c} $q_1/1$  \end{tabular}
}}    (q0);

\path [->,thick] (q0) edge[loop above]  node  [pos=.3]   {{ \begin{tabular}{c} $q_0,q_1,q_3,q_4,q_5/1$  \end{tabular}}}  (q5);

\path [->,thick] (q1) edge[loop above]  node  [pos=.3]   {{ \begin{tabular}{c} $q_0,q_2,q_3,q_4,q_6/1$  \end{tabular}}}  (q6);

\end{tikzpicture}
}
     \caption{A two-state scheduler enforcing $\varphi_g$ on $\System_g$. 
    } 
      \label{fig:control_gen}
 \end{figure}

We highlight that our example uses a deterministic and perfect-information scheduler-based strategy. ~\citet{DBLP:journals/corr/abs-1006-1404} details how randomization leads to even more general strategies, hence our choice to call our strategies ``general'', or to simply refer to them as ``strategies''. We note that the two definitions coincide on infinite-memory strategies.

Observation-based and bounded recall strategies are not mutually exclusive.  
We take $\setstrat_{\ag,n}$ as the set of observation-based strategies with bounded recall $n$ for agent $\ag$ and  $\setstrat_n = \cup_{\ag \in \Ag} \setstrat_{\ag,n}$. 


\begin{remark}
     The techniques used later in this paper to establish the complexity results can be applied to both bounded recall and memoryless strategies, yielding identical complexity bounds. Since bounded recall captures the case of memoryless strategies and is more general, we will focus on \emph{bounded recall strategies}. 
 \end{remark}

\section{Probabilistic {\ATL} 
}
\label{sec:patl}


Next, we recall the syntax of Probabilistic alternating-time temporal logic (\PATL) \cite{BelardinelliJMM23}, 
defined as follows:


\begin{definition} \label{def:ATLF-syntax}
	The syntax of \PATL  is given by the grammar
	\begin{align*}   	
		\varphi ::= p \mid  \varphi \lor \varphi \mid \neg \varphi \mid \coop{\coalition}^{\bowtie d} (\X \varphi) & \mid \coop{\coalition}^{\bowtie d}(\varphi \until\varphi)	
  \\ &\mid  \coop{\coalition}^{\bowtie d} (\varphi \release \varphi)	
  \end{align*}
  where $p \in \APf$, 
  $\coalition \subseteq \Ag$, $d$ is a rational constant in $[0, 1]$, and $\bowtie \in  \{\leq, <, >, \geq\}$.  
\end{definition}

The intuitive reading of the operators is as follows: ``next'' $\X$, ``release'' $\release$ and ``until'' $\U$ are the standard temporal operators.  We make use of the usual syntactic sugar ${\F \varphi \colonequals \top \U \varphi}$ and ${\G \varphi \colonequals \bot \release  \varphi}$ for temporal operators.  $\coop{\coalition}^{\bowtie d}\varphi$ asserts that there exists an observation-based  strategy 
for the coalition $\coalition$ to collaboratively enforce $\varphi$  with a probability in relation $\bowtie$ with constant $d$.


Before presenting the semantics, we show how to define the probability space on outcomes. 

 \paragraph{Probability Space on Outcomes. } We define a
 strategy profile as $\profile\strat = (\strat_\ag)_{\ag \in \Ag}$, and its \emph{output} from state $\pos$ 
 is a play $\iplay$ that starts in state
 $\pos$ and is extended by letting each agent follow their strategies in $\profile\strat$, formally $\iplay_{0} =
 \pos$, and for every $\compvark \geq 0$ 
 there exists $\mov_{\compvark} \in
 (\strat_\ag(\iplay_{\leq \compvark}))_{\ag \in \Ag}$ such that $\iplay_{\compvark+1} \in
 \trans(\iplay_{\compvark},\mov_{\compvark})$. 

 The (potentially infinite)  set of outputs of a profile of  strategies   $\profile\strat$ and state $\pos$  is denoted $Output(\profile\strat,\pos)$.  
 
 A given 
 system 
 $\System$, strategy profile $\profile{\strat}$, and state 
 $\pos$ induce an infinite-state Markov chain
 $M_{\profile{\strat},\pos}$ whose states are finite prefixes of plays (i.e., histories)
 in 
 $Output(\profile{\strat},\pos)$. Given two histories $h$ and $hs'$ in $Output(\profile{\strat},\pos)$,  the  transition probabilities in  $M_{\profile{\strat},\pos}$ 
are defined as  $p(\history,\history\pos')=\sum_{\Pi_{a\in \Ag} \Act}
 \profile{\strat}(\mov)(\history) \times
 \trans(\last(\history),\mov)(\pos')$.  
 The Markov chain
 $M_{\profile{\strat},\pos}$ induces a canonical probability space on
 its set of infinite paths~\citep{kemeny1976stochastic}, 
 and thus also in $Output(\profile{\strat},\pos)$.~\footnote{This is a classic construction, see for instance
 ~\citep{clarke2018model,berthon2020alternating}.
 }

  Given a coalition strategy $\profile{\strat_\coalition} \in \prod_{\ag \in \coalition} \setstrat_{\ag,n}$  with memory bound $n$, 
the set of possible outcomes of $\profile{\strat_\coalition}$ from a state $\pos \in \setpos$ is the set $out_\coalition(\profile{\strat_\coalition},\pos) = \{Output((\profile{\strat_\coalition},\profile{\strat_{-\coalition}}),\pos) : \profile{\strat_{-\coalition}} \in \prod_{\ag \in \Ag_{-\coalition}} \setstrata \}$ of paths that the players in $\coalition$ enforce when they
follow the strategy $\profile{\strat_\coalition}$, namely, for each $\ag \in \Ag$,
player $\ag$ follows strategy $\strat_\ag$ in $\profile{\strat_\coalition}$. We use $\mu^{\profile{\strat_\coalition}}_\pos$ to range over the measures induced by  $out_\coalition(\profile{\strat_\coalition},\pos)$.

\begin{definition}
\PATL  formulas are interpreted in a stochastic system $\System$  
  and a path  $\iplay$. Let $b \in \mathbb{N}_+\cup\{\infty\}$. The semantics of \PATL with $b$-bounded memory, denoted by the satisfaction relation $\models_{\compvar}$, is defined  as follows:  
\begingroup
\allowdisplaybreaks
\begin{align*}
 \System,\iplay &\models_{\compvar} p & \text{ iff } & p \in v(\iplay_0)\\
 \System,\iplay &\models_{\compvar} \neg \varphi & \text{ iff } & \System,\iplay \not \models_{\compvar} \varphi \\
 \System,\iplay &\models_{\compvar} \varphi_1 \lor \varphi_2 & \text{ iff }&  
 \System,\iplay \models_{\compvar} \varphi_1  \text{ or } \System,\iplay \models_{\compvar} \varphi_2
 \\
\System, \iplay&\models_{\compvar} \coop{\coalition}^{\bowtie d} \varphi & \text{ iff }  & 
\exists \profile{\strat_{\coalition}} \in \prod_{\ag \in\coalition} \setstrat_{\ag,b}  
\\ &&&  \text{ s.t. }  \forall \mu^{\profile{\strat_\coalition}}_{\iplay_0} \in out_\coalition(\profile{\strat_{\coalition}},\iplay_0)  
\text{, }
\\ &&& \mu^{\profile{\strat_\coalition}}_{\iplay_0}(\{\iplay' : \System,\iplay' \models_{\compvar} \varphi\}) \bowtie d\span\span\span
\\ 
 \System,\iplay &\models_{\compvar} \X \varphi & \text{ iff } & \System,\iplay_{\geq 1} \models_{\compvar} \varphi \\
\System, \iplay  & \models_{\compvar} \phi_1 \until \phi_2 & \text{ iff } &  \exists \compvark \geq 0 \text{ s.t. } \System,\iplay_{\geq \compvark} \models_{\compvar} \phi_2 \text{ and } 
\\ & & &
\forall j \in [0,\compvark).\, \,  \System,\iplay_{\geq j}\models_{\compvar} \phi_1 \\
\System, \iplay  & \models_{\compvar} \phi_1 \release \phi_2 & \text{ iff } &  \forall \compvark \geq 0 \text{ s.t. } \System,\iplay_{\geq \compvark} \models_{\compvar} \phi_2 \text{ or } 
\\ & & &
\exists j \in [0,\compvark).\, \,  \System,\iplay_{\geq j}\models_{\compvar} \phi_1 
\\
\end{align*} 
\endgroup
\end{definition}
\begin{remark}
    \PATL defined over systems with exactly one agent   corresponds to \PCTL over POMDPs with bounded-memory policies, and the literature we build on focuses almost exclusively on the fragment of \PCTL restricted to reachability objectives. Our objective is to handle all \PATL with bounded-memory in the MAS setting.  
\end{remark}

\begin{definition}
Given 
a system $\System$, state $s \in \setpos$, a memory bound $\compvar\in\mathbb{N}_+$  and   a formula $\varphi$ in 
\PATL, 
the \emph{model checking problem} for
\PATL consists of deciding
whether $\System, s \models_{\compvar} \varphi$. 
\end{definition}

\begin{proposition}[From~\cite{DBLP:conf/prima/HaoSLSGDL12}]
    The model checking problem for \PATL using infinite-memory strategies is undecidable.
\end{proposition}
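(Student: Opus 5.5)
The proposition is cited from \cite{DBLP:conf/prima/HaoSLSGDL12}, so my plan is a reduction from a known undecidable problem rather than a new argument. The natural source is the undecidability of \ATL model checking under imperfect information with perfect recall \citep{DBLP:journals/corr/abs-1102-4225}. That problem is already undecidable for deterministic concurrent game structures with observation functions. A deterministic system is exactly a stochastic system $\System$ in which every $\trans(\pos,\mov)$ is a point distribution.

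\textbf{Step 1: embed deterministic structures.} Given such a structure $\mathcal{G}$ and an \ATL formula $\psi$, I take $\System_{\mathcal{G}}$ to have the same states, legality, labeling and observations. For each legal joint action, $\trans$ assigns probability $1$ to the unique successor. I translate $\psi$ into \PATL by replacing each $\coop{\coalition}$ with $\coop{\coalition}^{\geq 1}$, applied recursively to $\X$, $\until$ and $\release$.

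\textbf{Step 2: show the two semantics agree with $b=\infty$.} I proceed by induction on $\psi$. For the direction from \ATL to \PATL, fix an \ATL observation-based perfect-recall strategy for $\coalition$ that guarantees the path formula on all outcomes. It is a deterministic infinite-memory scheduler, hence lies in $\setstrat_{\ag,\infty}$. Against any randomized opponents, every path in the support of the induced measure is an outcome of the deterministic game. Each such path therefore satisfies the formula, and the measure equals $1$.

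For the converse, suppose a possibly randomized coalition strategy achieves probability $1$ against all opponent strategies. First I make the opponents deterministic. Any single finite history $h$ then has positive probability, because the system transitions are Dirac. For $\X$ and $\until$ I want a deterministic coalition strategy. Since the winning condition is a safety or reachability-type condition on the game tree, a probability-$1$ win against every deterministic opponent forces every history in the support to avoid a violating prefix. Choosing any action in the support of the coalition strategy at each observation history then gives a sure-winning deterministic observation-based strategy.

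\textbf{Step 3: handle $\release$, the main obstacle.} Release (safety) is where probability $1$ might seem weaker than sureness. In this setting it is not: a violation is witnessed by a finite prefix, and under Dirac transitions and deterministic opponents every prefix in the support carries positive probability. Hence no violating prefix can lie in the support.

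I will need care with two points. First, observation-based randomization must be restricted so that the derived deterministic choice depends only on observations; picking a fixed support element per observation history preserves this. Second, nested formulas must be evaluated using the inductive hypothesis on state subformulas.

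\textbf{Conclusion.} This gives a computable reduction showing $\mathcal{G},s\models\psi$ iff $\System_{\mathcal{G}},s\models_\infty \psi'$. Undecidability of the \ATL problem then transfers to \PATL with infinite-memory strategies.
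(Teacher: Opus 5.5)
Your forward direction is fine, but the converse breaks at $\until$, and the equivalence you claim is false there. A path violates $\varphi_1\until\varphi_2$ either by hitting $\neg\varphi_1\wedge\neg\varphi_2$ before $\varphi_2$, or by satisfying $\varphi_1\wedge\neg\varphi_2$ forever. The second kind of violation has no finite violating prefix; only $\X$ and $\release$ violations are finitely witnessed. Under imperfect information, randomization can secure probability-$1$ reachability that no uniform deterministic strategy guarantees surely, even with Dirac transitions.

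Here is a counterexample. Take agents $a$ (the coalition) and $b$, and states $s_{\mathit{init}}, s_0, s_1, t$, where $t$ is a $p$-labelled sink.
\begin{itemize}
\item From $s_{\mathit{init}}$, $b$'s action $x\in\{0,1\}$ leads to $s_x$.
\item From $s_x$, $a$'s action $g$ leads to $t$ if $g=x$, and back to $s_{\mathit{init}}$ otherwise.
\item $\funobs_a(s_0)=\funobs_a(s_1)$.
\end{itemize}
If $a$ guesses uniformly at random, each round fails with conditional probability $1/2$ whatever $b$ does. Hence $\coop{\{a\}}^{\geq 1}(\top\until p)$ holds at $s_{\mathit{init}}$. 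Now fix any deterministic observation-based strategy of $a$. Along failing runs, $a$'s observation sequence does not depend on $b$, so $a$'s guesses $g_1,g_2,\ldots$ are fixed in advance. The outcome where $b$ plays $x_i=1-g_i$ never reaches $t$, so $\coop{\{a\}}\F p$ is false in \ATL. Thus $\mathcal{G},s\models\psi$ iff $\System_{\mathcal{G}},s\models_\infty\psi'$ fails as soon as $\psi$ contains $\until$, and the error propagates through nested state subformulas.

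Your argument can only be rescued for a fragment whose coalition objectives are all $\X$ or $\release$, such as a single unnested $\coop{\coalition}\G p$. You would then have to check that the source undecidability result already holds for that fragment. A reduction from non-halting, as in \citep{DBLP:journals/corr/abs-1102-4225}, naturally produces such a safety formula, but you must verify this, and also that its semantics is the objective one used here. Even then, ``first I make the opponents deterministic'' hides a needed step. \ATL outcomes let opponents choose actions arbitrarily, whereas \PATL opponents are observation-based. You must show that along any single finite history, a deterministic perfect-recall observation-based opponent can realize any action sequence; this holds because the prefixes have pairwise distinct lengths, hence distinct observation sequences. Only then does every finite prefix of an outcome of your derandomized strategy receive positive probability. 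Your justification ``because the system transitions are Dirac'' does not establish this.

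The paper avoids all of this. It observes that infinite-memory schedulers coincide with unrestricted functions $\funobs_{\ag}(\History)\to\distribution(\Act)$, and then applies the \PATL undecidability result of \cite{DBLP:conf/prima/HaoSLSGDL12} directly.
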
\begin{proof}
With infinite memory, the definition of scheduler-based strategies and classical sequential strategies coincide as unbounded functions $\sigma:\funobs_{\ag}(\History)\rightarrow\distribution(\Act)$. Hence we can directly apply the known result that \PATL with infinite memory and under partial observation is undecidable \cite{DBLP:conf/prima/HaoSLSGDL12}.     
\end{proof}



For the rest of the paper, we will assume $b \in \mathbb{N}_+$.




\section{Robustness}
\label{sec:robust}

\paragraph{Parametric Systems. }
In a parametric system, transition probabilities are replaced with equations over a finite set of variables $X$. We can thus consider whether a property holds for all valuations over $X$ that yield a well-defined set of probabilities. We use a definition similar to~\citep{DBLP:journals/jcss/JungesK0W21}.

  A \emph{parametric stochastic system} (or \emph{\param-System})
  $\System$ is a tuple 
  $ ( \setpos, \legal, X, 
  \trans, \vval,(\funobs_{\ag})_{ \ag\in \Ag}, \constraints)$ where 
  \begin{itemize}
      \item $\setpos$, $\legal$, $\vval$, and $(\funobs_{\ag})_{ \ag\in \Ag}$ are defined as for stochastic systems;
      \item $X$ is a finite set of real-valued \emph{parameters};
      \item   for each state $\pos \in \setpos$ and each  action $\mov \in \profile{\legal(\pos)}$, the \emph{parametric stochastic transition function} $\trans$ gives the (conditional) probability of a transition from state $\pos$ for all $\pos' \in \setpos$ if each player $\ag \in \Ag$ plays the action $\mova$. This probability is a rational function over $X$, formally $\trans(\pos, \mov)(s') \in \mathbb{Q}[X]$; and 
  \item $\constraints$ is a set of polynomial constraints of the form $A \bowtie B$ where $A,B\in \mathbb{Q}[X]$ and $\bowtie\in\{<,>,\leq,\geq, =\}$. Later on, we use these constraints to capture different kinds of perturbation. 
  \end{itemize} 
  




On a \param-system $\System$, a valuation $\Vval:X\to \mathbb{R}$ is \emph{well-defined} if replacing every $x\in X$ by $\Vval(x)$ yields a stochastic system $\System[\Vval]$. By extending the co-domain of $\Vval$ to rational functions over $X$, we can formally define well-defined valuations as requiring (i) probabilities to be non-negative: for each $\pos,\pos' \in \setpos$ and $\mov \in \profile{\legal(\pos)}$, we have $\Vval(\trans(\pos,\mov,\pos'))\geq 0$, and (ii) transitions to induce distributions: for each $\pos, \in \setpos$ and $\mov \in \profile{\legal(\pos)}$ we have $\Sigma_{\pos' \in\setpos}\Vval(\trans(\pos,\mov,\pos')) = 1$.

As above, \emph{parametric POMDPs} or \param-POMDPs are defined as \param-Systems with only one agent, \emph{parametric MDPs} or \param-MDPs as \param-POMDPs with perfect information, and \emph{parametric MCs} or \param-MCs as \param-MDPs with $|\legal(\pos)| = 1$ for every $\pos\in\setpos$.


\begin{example}[River (cont.)]
\label{ex:river2}

Suppose now that the river’s water quality is less stable than previously assumed. We remark on some dependency in this instability and represent it with parameters. Some transition probabilities now involve a perturbation $x$, and all these probabilities change together. 
We take the following family of perturbations parameterized by $x$ in the system $\CGS_{river}$ from Example \ref{ex:river}, that we call $\CGS^p_{river}$ and illustrated in Figure~\ref{fig:river3}. 

 \begin{figure}[hb] 
     \centering
  
     \scalebox{.7}{
     \begin{tikzpicture}[shorten >= 1pt, shorten <= 1pt,  auto]
  \tikzstyle{rond}=[circle,draw=black,minimum size  = 2cm] 
 \tikzstyle{wina}=[fill=green!20]
  \tikzstyle{winb}=[fill=red!10] 
   \tikzstyle{label}=[sloped,shift={(0,-.1cm)}]
  
  \node[rond,fill=White] (q0) { \begin{tabular}{c} $q_{normal}$  \end{tabular}};
  \node[rond,fill=White] (q1) [ above right= 1cm and 4 cm  of q0] {\begin{tabular}{c} $q_{high}$   \end{tabular}}; 
  \node[rond,fill=White] (q2) [ below right= 1cm and 4 cm  of q0] {\begin{tabular}{c} $q_{low}$  \end{tabular}};

 \path[->,thick,bend left= 20] (q0) edge  node [label,pos=.5]    {{ 
 \begin{tabular}{c} $(t,t)/1$ \\  $(t,d)/0.5+x $  \end{tabular}
}}    (q1);

 \path[->,thick,bend left= 20] (q0) edge  node [label,pos=.5]    {{ 
 \begin{tabular}{c} $(d,d)/1$ \\  $(t,d)/0.5-x $  \end{tabular}
}}    (q2);

 \path[->,thick,bend left  = 20] (q1) edge  node [label,pos=.5]    {{ 
 \begin{tabular}{c} $(d,d)/1$ \\  $(t,d)/0.25+x $  \end{tabular}
}}    (q0);

 \path[->,thick,bend left= 20] (q2) edge  node [label,pos=.5]    {{ 
 \begin{tabular}{c} $(t,t)/1$ \\  $(t,d)/0.25 $  \end{tabular}
}}    (q0);



\path [->,thick] (q1) edge[loop right]  node  [pos=.3]   {{ \begin{tabular}{c} $(t,t)/1$ \\  $(t,d)/0.75-x $  \end{tabular}}}  (q1);


\path [->,thick] (q2) edge[loop right]  node  [pos=.3]   {{ \begin{tabular}{c} $(d,d)/1$ \\  $(t,d)/0.75 $  \end{tabular} }}  (q2);

\end{tikzpicture}
}
     \caption{The parametric system $\CGS^p_{river}$ shows the case where we are unsure of some transitions, that may change together depending on $x\in[0,0.25]$.
     } 
      \label{fig:river3}
 \end{figure}
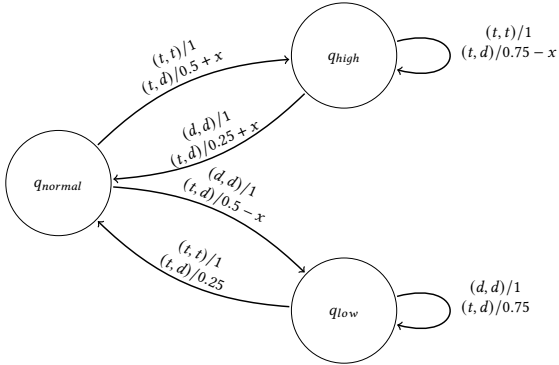

When in state $q_{normal}$, the probability that the water quality increases when only one company treats the water may be slightly higher than planned, and so the probability of going to  $q_{high}$ is $0.5+x$. As a consequence, the probability to go to $q_{low}$ under this situation is $0.5-x$, to compensate. 
At the same time, the probability of staying in $q_{high}$ with only one company treating water may be slightly lower than expected. but follows the same trend, hence it is $0.75-x$. Still, we can check that, as long as $x$ stays within $0$ and $0.25$, a company alone can always make sure to have at least probability $0.375$ of having a good water quality within two time steps by treating the water.

\end{example}

Following prior work ~\citep{hahn2019interval,Suilen0CT20,cubuktepe2021robust,ijcai2024p741}, the notion of robust strategies we consider requires dealing with uncertainty about the precise probabilities. Such uncertainty is modeled in terms of \textit{perturbations} that the system may face. Our first goal is to verify whether a strategy achieves a \textit{goal}, even though the system may face perturbations of  $\varepsilon\in[0,1]$. We start by introducing $\varepsilon$-approximated systems. 



Given a stochastic system $\System = ( \setpos, \legal, 
  \trans, \val,(\funobs_{\ag})_{ \ag\in \Ag})$, and $\varepsilon\in[0,1]$, we define the \emph{$\varepsilon$-approximated} parametric system $\System_{\varepsilon} =  ( \setpos, \legal, \allowbreak  X,  \trans', \val,(\funobs_{\ag})_{ \ag\in \Ag},\constraints)$ where for every $\pos,\pos'\in\setpos$ and $\mov \in \profile{\legal(\pos)}$, every transition probability $\trans'(\pos,\mov,\pos')\in X$ is a new variable. We add to $\constraints$ the constraints that $\trans'(\pos,\mov,\pos')-\trans(\pos,\mov,\pos') \leq \varepsilon$  and $\trans(\pos,\mov,\pos')-\trans'(\pos,\mov,\pos') \leq \varepsilon$, or equivalently  $|\trans(\pos,\mov,\pos')-\trans'(\pos,\mov,\pos')| \leq \varepsilon$. Hence, compared with $\System$, every probability in $\System_{\varepsilon}$ may have an error of up to $\varepsilon$.
  IMDPs are a subcase of $\varepsilon$-approximated parametric systems, more precisely IMDPs correspond to $\varepsilon$-approximated MDPs~\citep{DBLP:journals/ipl/ChenHK13}. 

  


\begin{remark}
    In general, parameters can introduce discontinuities in the structure of the graph, since edges can disappear when they probability become $0$. While it could change the complexity of the problem, this is not the case for the two results we use as building blocks in~\cite{DBLP:journals/ipl/ChenHK13,DBLP:journals/jcss/JungesK0W21}. Hence we have no need to check whether the graph structure is modified by the parameter valuations.
\end{remark}

\begin{example}[River (cont.)]
We now consider an interval perturbation $\epsilon=0.05$ in the system $\CGS_{river}$ from Example \ref{ex:river}. 
For each possible states $q$, $q'$ and joint action $(x,y)$, the $0.05$-approximated parametric system $\CGS_{0.05,river}$ allows for variations in the transitions probability that differ at most $0.05$ from the transitions in $\CGS_{river}$. Hence probability $0.5$ may be replaced by any value in $[0.45,0.55]$. One possible realisability of  $\CGS_{0.05,river}$ is illustrated in Figure \ref{fig:riverpert}. The obtained system, which we shall denote $\CGS_{river}'$, has some notable differences in relation to $\CGS_{river}$. First,  the transition from $q_{high}$ when both companies discharge wastewater 
is no longer deterministic. With high probability, the transition still leads to state $q_{normal}$ (that is, the water quality is decreased to normal levels), but there is a small chance that the water quality decreases to low levels, moving to state $q_{low}$ instead. 
Second, when only one company treats the water at state $q_{normal}$, it is more likely that the water quality will decrease (moving to state $q_{low}$) than to increase  (moving to state $q_{high}$). 
Clearly, 
it is more probable to reach state $q_{low}$ from the other two states in  $\CGS_{river}'$ than in the original system $\CGS_{river}$. Notice this is not the case for every realisability of $\CGS_{0.05,river}$, as it admitts other modifications in the  transition probabilities.

 \begin{figure}[ht] 
     \centering
  
     \scalebox{.7}{
    \begin{tikzpicture}[shorten >= 1pt, shorten <= 1pt,  auto]
  \tikzstyle{rond}=[circle,draw=black,minimum size  = 2cm] 
 \tikzstyle{wina}=[fill=green!20]
  \tikzstyle{winb}=[fill=red!10] 
   \tikzstyle{label}=[sloped,shift={(0,-.1cm)}]
  
  \node[rond,fill=White] (q0) { \begin{tabular}{c} $q_{normal}$  \end{tabular}};
  \node[rond,fill=White] (q1) [ above right= 1cm and 4 cm  of q0] {\begin{tabular}{c} $q_{high}$   \end{tabular}}; 
  \node[rond,fill=White] (q2) [ below right= 1cm and 4 cm  of q0] {\begin{tabular}{c} $q_{low}$  \end{tabular}};

 \path[->,thick,bend left= 20] (q0) edge  node [label,pos=.5]    {{ 
 \begin{tabular}{c} $(t,t)/1$ \\  $(t,d)/0.45 $  \end{tabular}
}}    (q1);

 \path[->,thick,bend left= 20] (q0) edge  node [label,pos=.5]    {{ 
 \begin{tabular}{c} $(d,d)/1$ \\  $(t,d)/0.55 $  \end{tabular}
}}    (q2);

 \path[->,thick,bend left  = 20] (q1) edge  node [label,pos=.5]    {{ 
 \begin{tabular}{c} $(d,d)/0.95$ \\  $(t,d)/0.25 $  \end{tabular}
}}    (q0);

 \path[->,thick,bend left= 20] (q2) edge  node [label,pos=.5]    {{ 
 \begin{tabular}{c} $(t,t)/1$ \\  $(t,d)/0.25 $  \end{tabular}
}}    (q0);

\path[->,thick, bend left= 20] (q1) edge  node [label,pos=.5]    {{  \begin{tabular}{c} $(d,d)/0.05$   \end{tabular}}}    (q2);


\path [->,thick] (q1) edge[loop right]  node  [pos=.3]   {{ \begin{tabular}{c} $(t,t)/1$ \\  $(t,d)/0.75 $  \end{tabular}}}  (q1);


\path [->,thick] (q2) edge[loop right]  node  [pos=.3]   {{ \begin{tabular}{c} $(d,d)/1$ \\  $(t,d)/0.75 $  \end{tabular} }}  (q2);

\end{tikzpicture}
}
     \caption{The system $\CGS_{river}'$, obtained from a realisability of the $0.05$-approximated system $\CGS_{0.05, river}$.
     } 
      \label{fig:riverpert}
 \end{figure}
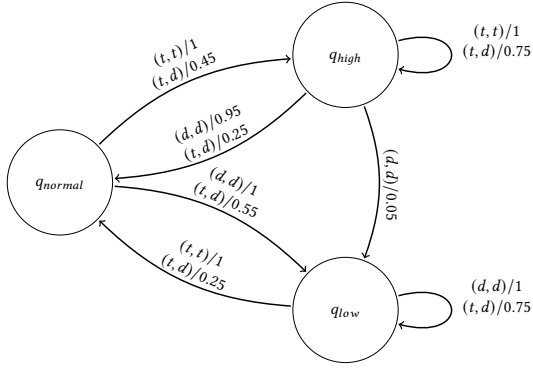

\end{example}

We now introduce the robust model checking problem for \PATL. 
\begin{definition}
Given 
a parametric system $\System$, state $s \in \setpos$, 
and  
  a formula $\varphi$ in 
\PATL, a memory-bound $\compvar$, 
\emph{the parametric model checking problem} for
\PATL consists of deciding
whether $\System[Val], s \models_{\compvar} \varphi$ for each well-defined valuation $Val$.

In the case where $\System$ is an $\varepsilon$-approximated system, we call it the  $\varepsilon$-robust model checking problem for
\PATL.
\end{definition}

We also consider the case where we assume the number of perturbed transitions is fixed. Indeed, in many cases, we can consider that only a few critical components have an uncertain behavior. As we will see, this assumption greatly helps to reduce the model checking complexity.

\section{Robust Model Checking of {\PATL}
}
\label{sec:modelchecking}



We start by the case in which the specification $\varphi$ defines a reachability objective. More precisely, we are interested in the case where a reachability specification $\varphi$ holds for all (arbitrary) strategies of the agents, formally:  
$\varphi = \coop{\emptyset}^{\bowtie d} \F R$, 
 where  
 $d$ is a rational constant in $[0, 1]$, $\bowtie \in  \{\leq, <, >, \geq\}$, and $R$ is a Boolean combination of atomic propositions. This is the same as considering whether a formula holds for all possible strategies in a parametric MDP. To obtain results on existential reachability, duality can be used, since there exists a strategy reaching a target $R$ with probability $>d$ iff we can prove that it is not the case that all strategies reach $R$ with probability $\leq d$. For more results on parametric reachability, see~\citep{DBLP:journals/jcss/JungesK0W21}.

 We start with $\varepsilon$-approximated MDPs. Since they are a specific case of interval MDPs, we can use the following result:

 \begin{restatable}{theorem}{checkreacheps}[From~\citep{DBLP:journals/ipl/ChenHK13}]
    \label{prop:check-reach-eps}
    The following problem is polynomial: Given an $\varepsilon$-approximated MDP $\System = ( \setpos, \legal, X,  \trans, \val,(\funobs_{\ag})_{ \ag\in \Ag}, \constraints)$, and a reachability objective $R$, determine whether for all $\compvar$-bounded strategies  and all well-defined parameters $X$, $R$ can be reached with probability $\bowtie d$ with $\bowtie\in\{<,>,\geq,\leq\}$.
\end{restatable}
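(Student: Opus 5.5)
The plan is to reduce the statement to the polynomial-time \PCTL model checking result for interval MDPs of~\citep{DBLP:journals/ipl/ChenHK13}. The work consists of three bridging steps: removing the memory bound, treating the parameter valuation as a second player, and turning the universal question into an extremal-value computation.

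\textbf{Step 1: the construction.} From the $\varepsilon$-approximated MDP $\System$ I build an IMDP with the same states, actions and labeling. Each transition $(\pos,\mov,\pos')$ receives the interval $[\max(0,\trans(\pos,\mov,\pos')-\varepsilon),\min(1,\trans(\pos,\mov,\pos')+\varepsilon)]$. A valuation of $X$ satisfies the constraints in $\constraints$ and is well-defined exactly when it picks, for every $(\pos,\mov)$, a distribution whose entries lie in these intervals. So well-defined valuations are in bijection with the choices of the IMDP's nature player. The construction is clearly polynomial.

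\textbf{Step 2: removing the memory bound.} The statement quantifies over $\compvar$-bounded (possibly randomized) strategies together with all valuations. For a universal reachability question, it is enough to compute the optimal values over \emph{all} strategies:
\[
p^{\min}=\inf_{\sigma,\Vval}\Pr^{\sigma}_{\System[\Vval]}(\F R), \qquad p^{\max}=\sup_{\sigma,\Vval}\Pr^{\sigma}_{\System[\Vval]}(\F R).
\]
These extrema are attained by memoryless deterministic strategies. By the classical result for MDPs, once the valuation is fixed, memoryless deterministic schedulers are optimal for reachability. Memoryless deterministic strategies are $\compvar$-bounded for every $\compvar\geq 1$, since the scheduler may ignore its extra modes. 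Hence the extremal values over $\compvar$-bounded strategies coincide with $p^{\min}$ and $p^{\max}$.

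Valuations need more care, because the valuation is fixed globally rather than chosen per visit. I will use that the optimal nature choice in the IMDP game is also stationary: in the Bellman equations the value at each $(\pos,\mov)$ is optimised by a vertex of the interval polytope. This is exactly the stationarity argument of~\citep{DBLP:journals/ipl/ChenHK13}. I expect this to be the main obstacle: justifying that the static (``uncertain once'') semantics of parameters matches the dynamic IMDP semantics for reachability. I would handle it by showing that an optimal memoryless strategy paired with an optimal memoryless nature choice yields a single valuation $\Vval$ that realises the game value.

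\textbf{Step 3: computing the values and deciding.} The values $p^{\min}$ and $p^{\max}$ are computed in polynomial time by the linear or convex programming characterisation of~\citep{DBLP:journals/ipl/ChenHK13}. That method uses the ellipsoid method with a polynomial separation oracle, obtained by sorting successors inside each interval. The answer then follows from the comparison:
\begin{itemize}
\item for $\bowtie\in\{\geq,>\}$, check $p^{\min}\bowtie d$;
\item for $\bowtie\in\{\leq,<\}$, check $p^{\max}\bowtie d$.
\end{itemize}
Strict inequalities are correct because the extrema are attained, as argued in Step 2. Graph-structure changes, where an edge's probability drops to $0$, need no separate check, as noted in the earlier Remark.
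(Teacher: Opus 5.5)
Your proposal is correct and follows essentially the same route as the paper: it treats the $\varepsilon$-approximated MDP as an interval model, uses Chen, Han and Kwiatkowska's result that the static (UMC) and dynamic (IMDP) semantics coincide for reachability, and then computes extremal reachability probabilities in polynomial time, which is the same as the paper's duality check against the complementary threshold $\bar{\bowtie} d$. You additionally make explicit two details the paper leaves implicit, namely that memoryless deterministic optima are $\compvar$-bounded and that the extrema are attained (needed for strict thresholds), and you use a direct IMDP construction instead of the paper's splitting into action-choice and distribution-choice states, which does not change the argument.
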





For the more general case of parametric systems, we first consider the case in which we have a bound on the number of parameters, extending results on parametric MDPs from~\cite{DBLP:journals/jcss/JungesK0W21}. 

 \begin{restatable}{theorem}{checkreachfixed}
    \label{prop:check-reach-fixed} 
    The following problem is in $\NPInter$\!\!: 
    Given a \param-MDP $\System = ( \setpos, \legal, X,  \trans, \val,(\funobs_{\ag})_{ \ag\in \Ag}, C)$ where the number of parameters is less than a fixed $m$, and a reachability objective $R$, 
    determine whether for all  $\compvar$-bounded strategies and all well-defined parameters $X$, $R$ can be reached with probability $\bowtie d$. 
\end{restatable}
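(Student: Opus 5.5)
We reduce the universal question to its complement and show that both the problem and its complement lie in $\NP$. The problem asks whether every $\compvar$-bounded strategy and every well-defined valuation reach $R$ with probability $\bowtie d$. Its complement asks whether there exist a valuation $\Vval$ satisfying $\constraints$ (and well-definedness) and a strategy $\sigma$ such that $\Pr^{\sigma}_{\System[\Vval]}(\F R) \not\bowtie d$. This existential problem is exactly the parametric-MDP reachability question studied in~\citep{DBLP:journals/jcss/JungesK0W21}. For a fixed number of parameters they show it is in $\NP$: guess the relevant memoryless deterministic scheduler together with the graph structure (which transitions are nonzero). This yields a parametric Markov chain whose reachability probability is a rational function of polynomial size, computable by state elimination or Cramer's rule. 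Checking the existence of a parameter point is then a first-order existential sentence over the reals with a constant number of variables, decidable in polynomial time by the fixed-dimension results of Renegar / Basu--Pollack--Roy. The complement is therefore in $\NP$, and the original problem is in $\coNP$.

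For membership in $\NP$, we exploit the fact that a $\forall\sigma$ condition over an MDP with fixed parameters collapses to an optimization problem. For a fixed $\Vval$, $\sup_\sigma$ and $\inf_\sigma$ of the reachability probability in $\System[\Vval]$ are attained by memoryless deterministic strategies. So "all strategies satisfy $\bowtie d$" is equivalent to the corresponding extremal value satisfying $\bowtie d$, and the extremal value is the unique solution of a Bellman system, after precomputing the qualitative states with probability $0$ or $1$. With $\compvar$-bounded strategies on an MDP with full observation, the bound plays no role here: memoryless strategies already realize the extremum and belong to $\setstrat_{\ag,\compvar}$, while richer strategies cannot exceed it.

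The problem then reads: for all $\Vval$ satisfying $\constraints$, the optimal value satisfies $\bowtie d$. To obtain an $\NP$ certificate, we guess for each relevant region of the parameter space the structure of the graph and an optimal memoryless strategy. Optimality of that strategy and the value bound are then expressed as a universal sentence $\forall \Vval\,(\ldots)$ with a fixed number of variables and polynomially many polynomial inequalities. Such sentences are decidable in polynomial time in fixed dimension. The main obstacle is that a single guessed strategy need not be optimal on the whole parameter space. However, with $m$ fixed, the arrangement of the polynomially many sign conditions (edge positivity, Bellman optimality inequalities) has only polynomially many cells, by Basu--Pollack--Roy. On each cell the optimal memoryless strategy is constant.

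The certificate therefore consists of a map from sample points of cells to strategies, verified cell-by-cell. Alternatively, and this is the route we would try first, we argue as in~\citep{DBLP:journals/jcss/JungesK0W21} that their $\NP$ procedure for the existential problem dualizes: the universal problem with $\inf$/$\sup$ swapped is again an existential-over-strategies, universal-over-parameters statement of fixed dimension. Combining both memberships gives $\NPInter$.
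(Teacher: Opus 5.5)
Your $\coNP$ half matches the paper's: the complement is the existential pMDP reachability problem, which is in $\NP$ for a fixed number of parameters by Junges et al. The gap is in the $\NP$ half, at the claim that polynomially many cells suffice, each with a constant optimal memoryless strategy. The Basu--Pollack--Roy bound $(sD)^{O(m)}$ is polynomial only if the number $s$ of polynomials is.

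The edge-positivity polynomials are polynomially many, but they only fix the graph. Take one parameter $x$ with constraints $0.4\le x\le 0.6$, and two actions at the initial state that reach $R$ with probabilities $x$ and $1-x$ respectively (and otherwise a non-$R$ sink). The whole region lies in a single cell of that arrangement, yet the minimizing action switches at $x=1/2$.

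To make the optimal strategy constant on cells, you must add the Bellman polynomials $\sum_{s'}\trans(s,a,s')V^{\sigma}_{s'}-V^{\sigma}_s$ (denominators cleared) for every memoryless $\sigma$ that is optimal somewhere. Over all strategies this family is exponential. Writing the inequalities with the optimal value $V^*$ instead is circular, since $V^*$ is only piecewise rational, with exactly those pieces.

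What your certificate really needs is a covering lemma: polynomially many memoryless strategies such that every well-defined valuation has an optimal one among them. Given such a lemma, your verification works: it is a universal sentence in $m$ variables over a disjunction of ``Bellman-optimal and meets the threshold'' conditions, with polynomially many polynomials of polynomial degree. But you give no argument for the lemma, and it cannot be taken for granted. In closely related parametric problems (parametric linear programming, parametric shortest paths), the optimum is known to change superpolynomially often even along a single parameter.

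Your fallback does not close the gap. It is precisely the paper's $\NP$ argument: guess one worst-case memoryless strategy, then check via the Bellman equations, as a sentence universally quantified over the fixed parameters, that it is optimal and meets the threshold. Dualizing the existential procedure of Junges et al. this way turns $\forall\Vval\,\exists\sigma$ into $\exists\sigma\,\forall\Vval$, which is sufficient but not necessary. In the example above with threshold $\geq 0.3$, the instance is positive, since the minimum is $\min(x,1-x)\geq 0.4$. Yet no single strategy is minimizing on all of $[0.4,0.6]$, so no certificate of that form exists. This is exactly the obstacle you flagged yourself. As written, your proposal, like the paper's sketch that rests on the same quantifier swap, establishes only the $\coNP$ upper bound.
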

\begin{proof}
We first show that this problem is in $\coNP$. The dual problem, that is deciding if there exists a parameter and a strategy that does satisfy a reachability probability with probability $\bowtie d$, is in $\NP$ by Proposition 15 of~\citep{DBLP:journals/jcss/JungesK0W21}, hence our problem is in $\coNP$.

To show that our problem is in $\NP$, we first remark that we are trying to find a strategy satisfying a reachability property in an MDP: we restrict ourselves to memoryless strategies (by Lemma 10.102 of~\citep{DBLP:books/daglib/0020348}). We use the same technique as Theorem 12 of~\citep{DBLP:journals/jcss/JungesK0W21}: We guess a strategy, representing the worst possible case, either minimizing (for $\geq d$ and $> d$) or maximizing (for $\leq d$ and $< d$) the probability to reach $R$, and check that this worst-case strategy is still optimal. Using the Bellman equations, we translate this optimality constraint into an $\exists\Reals$ formula $\varphi$ with a fixed number of parameters. Since we are trying to decide whether for all well-defined valuations on $X$, $\varphi$ holds, we have a $\forall\exists\Reals$ formula with a fixed amount of parameters, which can be checked in $\PTIME$.   
\end{proof}

Finally, we consider the case where we have an arbitrary number of parameters. From now on, we start making use of complexity classes involving the \emph{theory of the reals}. 
While the polynomial hierarchy can be understood as finding values for Boolean-valued formulas, the theory of the reals considers real-valued formulas, yielding its own complexity classes. In particular, $ \forall\Reals$ consists of deciding if a formula is true for all possible real valuations of its variables. For an integer $k$, the class $\Pi_k^{\Reals}$ designates the problem of deciding formulas starting with $\forall$ and with $k-1$ quantifier alternations. We give more details on this theory 
\iffullpaper 
in the appendix.
\else 
in \textcolor{red}{CITE}.
\fi

 \begin{restatable}{theorem}{checkreachgen}
    \label{prop:check-reach-gen}
    The following problem is in $\forall\Reals$: 
    Given a \param-MDP $\System = ( \setpos, \legal, X,  \trans, \val,(\funobs_{\ag})_{ \ag\in \Ag}, \constraints)$, and a reachability objective $R$, 
    determine whether for all  $\compvar$-bounded strategies and all well-defined parameters $X$, $R$ can be reached with probability $\bowtie d$. 
\end{restatable}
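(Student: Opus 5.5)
The plan is to reduce the question to a single universally quantified sentence of the first-order theory of the reals of polynomial size. First, since the objective is reachability on an MDP (perfect information, one agent), I would restrict to memoryless deterministic strategies as in the proof of Theorem~\ref{prop:check-reach-fixed}, using Lemma 10.102 of~\citep{DBLP:books/daglib/0020348}. For every fixed well-defined valuation, the optimum over all strategies, including $\compvar$-bounded and randomized ones, is then attained by a memoryless deterministic strategy. So ``for all $\compvar$-bounded strategies, $R$ is reached with probability $\bowtie d$'' is equivalent to ``the optimal value satisfies $\bowtie d$''. The relevant optimum is the minimum for $\geq,>$ and the maximum for $\leq,<$. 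This removes the strategy quantifier.

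Second, I would encode the optimal reachability value by Bellman-type constraints with real variables $y_\pos$, one per state. Take the case $\geq d$, where we need the minimal reachability probability. Using a graph precomputation, fix the set $S_0$ of states from which some strategy avoids $R$ forever. This set depends only on the support of transitions, which the Remark on graph structure lets us treat as fixed. Then I would write
\[
\Phi \;:=\; \forall X\,\forall (y_\pos)_{\pos}.\;\Big(\mathrm{WD}(X)\wedge \textstyle\bigwedge C \wedge \mathrm{Bell}(X,y)\Big)\rightarrow y_{\pos_0}\bowtie d.
\]
Here $\mathrm{WD}$ expresses well-definedness: non-negativity and summing to one, after clearing denominators of the rational functions, with non-zero denominators asserted. $\mathrm{Bell}$ states $y_\pos=1$ for $\pos\models R$, $y_\pos=0$ for $\pos\in S_0$, and $y_\pos=\min_{\mov}\sum_{\pos'}\trans(\pos,\mov)(\pos')\,y_{\pos'}$ otherwise. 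The $\min$ is expressed as a conjunction of $\leq$ over all actions together with a disjunction of equalities. On the remaining states this system has the optimal value as its unique solution, by the standard uniqueness argument for reachability once $S_0$ and the $R$-states are fixed. Hence the implication is true exactly when the value satisfies $\bowtie d$ for every well-defined valuation. The max case is symmetric. Since both $X$ and $y$ are universally quantified, $\Phi$ is a purely universal sentence of polynomial size, so the problem is in $\forall\Reals$.

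The main obstacle is the uniqueness claim. If the Bellman system had several fixed points, universally quantifying $y$ would give an incorrect answer. I would handle this with the qualitative preprocessing above: compute the probability-$0$ and probability-$1$ sets by graph analysis on the support. If the support can vary with the valuation, I would either case-split on supports inside the formula, as a polynomial-size disjunction of conditions per edge, or rely on the paper's Remark citing~\citep{DBLP:journals/jcss/JungesK0W21}. As an alternative, I could dualize: the complement asks whether some valuation and some memoryless strategy violate the bound, which is in $\exists\Reals$ by~\citep{DBLP:journals/jcss/JungesK0W21}. Since $\forall\Reals$ is the complement class of $\exists\Reals$, this also gives the result and sidesteps the uniqueness issue.
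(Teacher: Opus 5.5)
The step that breaks is your claim that the max case is symmetric to the min case. The Bellman encoding works for $\geq,>$ but fails for $\leq,<$.

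The maximal reachability probability is the \emph{least} fixed point of the max-Bellman operator. Fixing the $R$-states and the states of maximal value $0$ does not make the solution unique when end components remain. Take a state $\pos$ with two actions: one loops on $\pos$ with probability $1$, the other reaches an $R$-state and a non-$R$ sink with probability $\frac12$ each. The maximal value at $\pos$ is $\frac12$, and only the sink is forced to $0$. But $y_\pos=\max(y_\pos,\frac12)$ holds for every $y_\pos\in[\frac12,1]$. So your sentence $\forall y.\,\mathrm{Bell}\rightarrow y_\pos\leq\frac12$ is false, although every strategy reaches $R$ with probability at most $\frac12$. This already happens without parameters.

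Restoring uniqueness would require quotienting maximal end components, whose structure depends on the valuation-dependent support. The natural alternative characterization (some super-solution has $y_{\pos_0}\leq d$) is existential rather than universal.

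In the min case, by contrast, every non-negative solution dominates the minimal value, because that value is the least fixed point. So universal quantification over $y$ is correct there, even without $S_0$ and without any assumption on supports. Your per-edge case split would not determine $S_0$ anyway, since $S_0$ depends on the whole support pattern.

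Your fallback does prove the statement. The complement asks for a well-defined valuation and a strategy meeting the complemented threshold. Memoryless deterministic strategies are $\compvar$-bounded and optimal for each fixed valuation. So the complement is existential reachability on \param-MDPs, which is in $\exists\Reals$ by \citep{DBLP:journals/jcss/JungesK0W21}. Negating the resulting sentence gives a $\forall\Reals$ instance.

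The paper argues directly instead. It uses Lemma 10.102 to restrict to memoryless strategies. It then asserts that ``for all memoryless strategies and all well-defined valuations, $R$ is reached with probability $\bowtie d$'' is a universal sentence. In other words, it keeps the strategy as universally quantified variables rather than eliminating it through optimal values. It also leaves implicit how the induced chain's reachability probability is encoded, and the same non-uniqueness has to be dealt with there for $\leq,<$.

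Your dual route replaces that self-contained but sketchy encoding with an explicitly cited result. This mirrors how the paper proves the $\coNP$ half of Theorem~\ref{prop:check-reach-fixed}. I would lead with it, and either drop the Bellman encoding or repair it for the max case.
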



We can now take a system $\CGS$,  and consider the model checking of \PATL under different definitions of robustness. 
We start with an example, before considering the case of $\varepsilon$-robustness, using a subcase we later build upon.

\begin{example}[River (cont.) ] 
\label{ex:river3}
Going back to our running example, let $\compvar$ denote a memory bound. 
The following formula expresses that a coalition $\coalition$ has a strategy to guarantee, with probability at least $0.5$, that the river quality will be high in the next state: $\coop{C}^{\geq 0.5}\X high$.

In the system $\CGS_{river}$ and starting from state $q_{normal}$, this formula is true with memory at most $\compvar$ for any non-empty coalition, that is, $\CGS_{river}, q_{normal} \models_{\compvar} \bigwedge_{\coalition\neq\emptyset}\coop{\coalition}^{\geq 0.5}\X high$.
However, this is not the case in  the modified system $\CGS_{river}'$ (Figure \ref{fig:riverpert}), 
 since a company alone cannot guarantee to reach $q_{high}$ in the next state with enough probability.  Thus,
$$\CGS_{river}', 
q_{normal} \not\models_{\compvar} \bigwedge_{\coalition\neq\emptyset}\coop{\coalition}^{\geq 0.5}\X high$$


In all realisability of $\CGS_{0.05,river}$, both companies should cooperate to ensure the river will be clean in the next state with a probability at least $0.95$. Thus, $$\CGS_{0.05,river}[\Vval], q_{normal}  \models_{\compvar} \coop{\Ag}^{\geq 0.95}\X high$$
for any valuation $\Vval$.
\end{example}

In the following, a formula $\varphi$ has a chaining of coalition operators if it contains a sub-formula $\coop{\coalition}^{\bowtie d} \psi$, where $\psi$ itself contains a coalition operator. We use such formulas as a base case on which we build our decision procedure for the complete logic. 

\begin{restatable}{lemma}{robpatlunit}\label{prop:rob-patl-unit}
     The following problem is in $\NP$: Given an $\varepsilon$-approximat\-ed system $\System = ( \setpos, \legal, X,  \trans, \val,(\funobs_{\ag})_{ \ag\in \Ag}, \constraints)$, model check $\varepsilon$-robustly a  \PATL  formula $\varphi$ without chaining of coalition operators, and using deterministic  $\compvar$-bounded strategies for the coalitions. 
\end{restatable}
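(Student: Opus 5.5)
Since $\varphi$ has no chaining of coalition operators, every subformula $\coop{\coalition}^{\bowtie d}\psi$ has a path formula $\psi$ that is $\X p$, $p_1\until p_2$ or $p_1\release p_2$ with $p,p_1,p_2$ Boolean combinations of atomic propositions. The top level of $\varphi$ is then a Boolean combination of such coalition subformulas, each evaluated at the given state $s$. My plan is to evaluate each maximal coalition subformula separately with an NP procedure, and to handle negations by duality so that only positive occurrences need existential guessing. The certificate consists of one deterministic $\compvar$-bounded strategy per coalition occurrence that is claimed true. A deterministic scheduler $(Q,act,\Delta,start)$ with $|Q|=\compvar$ is described by functions $Q\times O_\ag\to Q$, $Q\times O_\ag\to\Act$ and $O_\ag\to Q$, so it has size polynomial in $\compvar$ and $|\System|$. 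This assumes $\compvar$ is given in unary, which I would make explicit.

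Given such a guessed coalition strategy, I form the product of $\System_\varepsilon$ with the memory states of all coalition members. This product has $|\setpos|\cdot\compvar^{|\coalition|}$ states; I treat the number of agents as fixed or polynomially bounded in the encoding. Fixing the coalition's choices collapses the product into an $\varepsilon$-approximated MDP in which the remaining agents $\Ag_{-\coalition}$ act as a single player. The well-definedness constraints stay intervals because each product transition copies exactly one original uncertain probability, so the result is an interval MDP. The opponents may use arbitrary randomized strategies, but for reachability, until and release objectives, randomization and memory do not help the minimiser. The interval choice also acts as a second adversary, and by \citep{DBLP:journals/ipl/ChenHK13} it can be resolved together with the opponent's choice.

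I then reduce $p_1\until p_2$ to reachability by making $\neg p_1\wedge\neg p_2$ states absorbing losing sinks and $p_2$ states absorbing targets. $\X p$ is a one-step computation. $p_1\release p_2$ is the dual of $\neg p_1\until\neg p_2$, which turns $\bowtie d$ into the complementary threshold on $1-d$. By Theorem~\ref{prop:check-reach-eps}, deciding whether all opponent strategies and all well-defined valuations yield probability $\bowtie d$ takes polynomial time. So the verification step is polynomial and each positive occurrence is in $\NP$.

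For a negated occurrence $\neg\coop{\coalition}^{\bowtie d}\psi$, I must show that every deterministic $\compvar$-bounded coalition strategy fails. This is a coNP-type condition, so an arbitrary Boolean combination would naively need $\NP\cap\coNP$ or $\PTIME^{\NP}$ reasoning. This is the main obstacle. My first attempt is to exploit the robust semantics, which says that for every valuation $\Vval$, $\System[\Vval],s\models\varphi$. A negated occurrence then asks for a valuation under which no coalition strategy works. The quantifier order makes this a $\forall\Vval\,\forall$strategy$\,\exists$opponent statement, which does not collapse to a polynomial check. Realistically, I would settle for placing the Boolean combination inside a polynomial check only when negations occur solely at the outermost level, or when the statement is read positively. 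Otherwise I would accept that the lemma's ``NP'' refers to checking each unit $\coop{\coalition}^{\bowtie d}\psi$, which is what the subsequent $\Sigma_2^{\PTIME}$ theorem uses as an oracle.
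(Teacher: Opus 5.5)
For a single positive occurrence, your procedure is the paper's proof. You guess a deterministic $b$-bounded coalition strategy of polynomial size and fix it to get an $\varepsilon$-approximated MDP for the remaining agents. You then reduce next/until/release to reachability (the paper cites Propositions 5.1--5.2 of Alur, Henzinger and Kupferman) and decide the universal reachability question in polynomial time with Theorem~\ref{prop:check-reach-eps}. Your remarks on unary $b$ and the $|S|\cdot b^{|C|}$ product are left implicit in the paper.

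\textbf{Negation.} Your worry is justified, and it concerns the statement rather than your attempt. The paper's proof never mentions negation or Boolean combinations, and Theorem~\ref{prop:rob-patl} only calls the lemma as an NP oracle on units $\langle\!\langle C\rangle\!\rangle^{\bowtie d}\psi$. So your fallback reading is exactly what is proved and used. That reading is in fact forced: the unit problem is already NP-hard for $\varepsilon=0$, one agent and memoryless deterministic strategies (the POMDP result recalled in the appendix). NP-membership with negated units would therefore give NP$=$coNP.

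Two clarifications on this point:
\begin{itemize}
\item What stays in NP are formulas in which coalition operators occur only positively (under $\wedge,\vee$). A negation at the outermost level is precisely the coNP case, not a safe one.
\item Both you and the paper guess $\sigma$ before quantifying over valuations. What is decided for a unit is therefore the robust-strategy condition $\exists\sigma\,\forall\mathit{Val}\,\forall\mu$, as in the proof of Theorem~\ref{prop:rob-patl-gen}. This is stronger than the literal $\forall\mathit{Val}\,\exists\sigma$ of the problem definition. Under this reading a negated unit is simply $\forall\sigma\,\exists\mathit{Val}\,\exists\mu$, a coNP condition, not the $\forall\mathit{Val}\,\forall\sigma\,\exists\mu$ alternation you describe.
\end{itemize}

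\textbf{A step both you and the paper skip.} You state that the product ``is an interval MDP''; the paper equally quickly says ``we obtain an $\varepsilon$-approximated MDP''. This is not justified. In the product with the coalition's memory, all copies $(s,\bar m)$ in which the same joint action $\vec a$ is played use the \emph{same} variables $x_{s,\vec a,s'}$, because a valuation is fixed once for the whole system. The product is thus a parametric MDP with shared parameters, not an interval MDP with independent intervals.

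Untying the copies only strengthens the adversary, so your check is sound, but it can reject strategies that are in fact robust. Schematically, suppose the fixed strategy wins with probability $p+(1-p)(1-p')$, where $p,p'$ are the values of the same variable for an edge $s\to t_1$ at a first and a second visit under different memory states. With $p=p'\in[\frac12-\varepsilon,\frac12+\varepsilon]$ the minimum is $\frac34$, attained in the interior. Independent copies instead give $\frac34-\varepsilon-\varepsilon^2$.

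The UMC/IMDP coincidence of Chen et al.\ compares one choice with per-step choices inside a single interval model, and does not remove ties created by a product. Completeness of the guess-and-check procedure therefore needs an extra argument or a different verification step. In the fixed-parameter variant (Lemma~\ref{prop:patl-bound-unit}) the issue does not arise: the product is again a parametric MDP over the same parameters, and Theorem~\ref{prop:check-reach-fixed} handles shared parameters directly.
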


We then conclude on the robust model checking problem:

\begin{restatable}{theorem}{robpatl}\label{prop:rob-patl}
    The following problem is in $\NP^{\NP} = \Sigma_2^{\PTIME}$: Given an $\varepsilon$-approximated system $\System = ( \setpos, \legal, X,  \trans, \val,(\funobs_{\ag})_{ \ag\in \Ag}, \constraints)$, model check $\varepsilon$-robustly a  \PATL  formula $\varphi$ using deterministic  $\compvar$-bounded strategies for the coalitions. 
\end{restatable}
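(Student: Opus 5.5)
The proof runs by induction on the nesting depth of coalition operators in $\varphi$, using Lemma~\ref{prop:rob-patl-unit} as the base case and as an $\NP$ oracle. The obstacle is that robustness quantifies over valuations \emph{outside} the formula: the statement asks whether $\System[\Vval], s \models_{\compvar} \varphi$ for every well-defined $\Vval$. A subformula may therefore hold in some realisations and fail in others, so we cannot label states once by truth values. Each inner subformula is handled as follows.

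\textbf{Step 1: reduce chaining to the unit case by labelling.} Take an innermost coalition subformula $\psi = \coop{\coalition}^{\bowtie d}\psi'$ with no chaining. For each state $s$, ask the oracle of Lemma~\ref{prop:rob-patl-unit} two questions:
\begin{itemize}
\item whether $\psi$ holds at $s$ for all valuations;
\item whether $\neg\psi$ holds at $s$ for all valuations.
\end{itemize}
The second question may require the dual, which has a $\forall$ over coalition strategies; this needs care, see Step 3.

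Introduce a fresh proposition $p_\psi$. Label $s$ with $p_\psi$ if the first query succeeds, and leave it unlabelled if the second succeeds. If neither succeeds, the truth of $\psi$ at $s$ depends on the valuation.

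\textbf{Step 2: guess-and-check in $\NP^{\NP}$.} The outer $\NP$ machine guesses the following certificate:
\begin{itemize}
\item deterministic $\compvar$-bounded strategies for each coalition at each state. These are polynomial in size, since $\compvar$ is given, the scheduler tables have size $\compvar \cdot |O| \cdot$ polynomial, and the number of subformulas and states is polynomial;
\item the labelling of the fresh propositions.
\end{itemize}
Once coalition strategies are fixed, the product of $\System_\varepsilon$ with the memory automata is an interval MDP for the opponents. Every remaining check is then a universal statement over valuations and opponent strategies, and by Theorem~\ref{prop:check-reach-eps} and its standard extension to until/release/next (PCTL over IMDPs in polynomial time) each such check is polynomial. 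These checks are therefore performed directly, or through the $\NP$ oracle that decides the unit cases, with the $\forall$ over valuations absorbed into the IMDP check.

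\textbf{Step 3: the main obstacle, valuation consistency.} A single valuation must be used across the whole formula, whereas the labelling approach treats each nested subformula with its own worst-case valuation. My plan is to argue that this is harmless for $\varepsilon$-approximated systems. The perturbations are rectangular, meaning independent per $(s,\mov)$, so the nature can be viewed as choosing a perturbation at each transition. The robust truth of the whole formula then coincides with a game semantics in which each subformula is checked against all valuations.

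Where that fails, namely when a state gets neither label, I would instead have the outer machine guess, for each such state, the valuation-dependent truth value. Soundness would be obtained by checking the outer formula under both polarities: a monotonicity argument shows that the pessimistic choice (the label that makes the outer formula hardest) suffices, because $\PATL$ operators are monotone in the target sets once negations are pushed to atoms. This makes the whole procedure a polynomial guess followed by polynomially many $\NP$-oracle or polynomial-time checks, i.e.\ $\Sigma_2^{\PTIME}$. I expect justifying this independence-across-nestings step to be the delicate part of the proof.
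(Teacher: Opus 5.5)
\textbf{The gap is Step~3.} Your Steps~1--2 are essentially the paper's argument. But Step~3 is what your proof relies on once you adopt the single-global-valuation reading, and it is false in both of its forms.

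Take one agent with one action, and states $s,u$ with $p$ true only at $u$. Give $s\to s$ and $s\to u$ nominal probability $1/2$, make $u$ a sink, and set $\varepsilon=1/4$. Let $x\in[1/4,3/4]$ be the perturbed self-loop probability at $s$.
\begin{itemize}
\item For $\psi=\coop{\emptyset}^{<1/2}\X p$, we have $s\models\psi$ iff $x>1/2$.
\item For $\varphi=\coop{\emptyset}^{\geq 1/2}\X(\psi\lor p)$, the probability of $\X(\psi\lor p)$ from $s$ is $1$ if $x>1/2$, and $1-x\geq 1/2$ otherwise. So $\varphi$ holds at $s$ under every valuation.
\item Yet $\psi$ gets neither label at $s$. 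The pessimistic choice (false, since $\psi$ occurs positively) turns $\varphi$ into $\coop{\emptyset}^{\geq 1/2}\X p$, which fails for $x>1/2$.
\item Your per-subformula ``game semantics'' gives the same wrong answer.
\end{itemize}
Rectangularity does not help. It only decouples distinct state--action pairs, whereas here the same variable $x$ is read by both the inner and the outer operator. The outer truth value depends on the correlation between the inner label and the valuation, and any valuation-independent labelling discards that correlation. The monotonicity premise is also unavailable. Negations cannot be pushed through coalition operators, since there is no determinacy under imperfect information and bounded memory. Moreover, $\coop{C}^{\leq d}$ is antitone in its target.

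\textbf{How the paper proceeds.} The paper makes no consistency argument. It guesses a truth value $a_i$ for each coalition subformula and replaces nested operators by these propositions. It certifies $a_i=\top$ by a yes-answer and $a_i=\bot$ by a no-answer of the $\NP$ oracle of Lemma~\ref{prop:rob-patl-unit}. The proof of that lemma guesses the coalition strategy \emph{before} quantifying over valuations and opponents. Robustness is therefore effectively evaluated per coalition operator: ``not robustly true'' simply is the label false, and your ``neither'' case never arises.

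Under that reading, your Steps~1--2 already constitute the proof and Step~3 should be dropped. Guessing coalition strategies in the outer machine is harmless but redundant. Under the literal reading you adopted, the paper's labelling returns the same wrong answer on the example above. So that reading is not what the guess-and-verify argument establishes, and your pessimistic-labelling patch does not repair it.
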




A result similar to Lemma~\ref{prop:rob-patl-unit} holds for parametric systems with a fixed number of parameters, but is in $\NPInter$:

\begin{restatable}{lemma}{patlboundunit}\label{prop:patl-bound-unit}
     The following problem is in $\NP$: Given a \param-system $\System = ( \setpos, \legal, X,  \trans, \val,(\funobs_{\ag})_{ \ag\in \Ag}, C)$, where the number of parameters is less than a fixed $m$, model check parametrically a \PATL  formula $\varphi$ with no chaining of coalition operators, and using deterministic $\compvar$-bounded strategies for the coalitions. 
\end{restatable}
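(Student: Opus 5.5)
We follow the structure we expect for Lemma~\ref{prop:rob-patl-unit}: guess the coalitions' deterministic strategies, then verify the guess against all valuations and all opponent behaviours using Theorem~\ref{prop:check-reach-fixed}. Since $\varphi$ has no chaining, every coalition subformula $\coop{\coalition}^{\bowtie d}\psi$ has a $\psi$ whose temporal operands are Boolean combinations of atomic propositions. We therefore treat each such subformula separately and combine the answers with the Boolean connectives at the top of $\varphi$.

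\textbf{Step 1: guessing the strategies.} For each coalition subformula $\coop{\coalition}^{\bowtie d}\psi$, we guess one deterministic $\compvar$-bounded strategy per agent $\ag\in\coalition$, given as a scheduler $(Q,act,\Delta,start)$ with $|Q|=\compvar$ and all distributions Dirac. Such a scheduler is described by functions $Q\times O_\ag\to Q$, $Q\times O_\ag\to\Act$ and $O_\ag\to Q$. Its size is polynomial in $|\setpos|$, $|\Act|$ and $\compvar$ (assuming $\compvar$ is given in unary, as is implicit throughout). A negated coalition subformula cannot be handled by guessing a witness. To stay in $\NP$, we treat the Boolean structure of $\varphi$ as follows. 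We guess the truth value of each coalition subformula, together with a witness strategy for each one guessed true. The false ones would need a universal check over strategies and so belong to the dual side. If this breaks the $\NP$ bound, we fall back to assuming positive occurrences, as in Lemma~\ref{prop:rob-patl-unit}.

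\textbf{Step 2: product construction.} We fix the guessed strategies and take the product of $\System$ with their memory automata. The states are $\setpos\times\prod_{\ag\in\coalition}Q_\ag$. The agents in $\coalition$ play as their schedulers dictate, and the remaining agents are merged into a single player. The result is a \param-MDP whose state space is polynomial in $|\setpos|$ and $\compvar^{|\coalition|}$; for a fixed memory bound and a coalition of fixed size, this is polynomial. It has the same parameter set $X$, so the number of parameters stays below $m$, and the same constraints $\constraints$. The opponents may use arbitrary, even unbounded randomized, strategies. Reachability, until and release objectives in an MDP admit optimal memoryless deterministic strategies (Lemma 10.102 of~\citep{DBLP:books/daglib/0020348}), so quantifying over them matches the ``for all strategies'' in Theorem~\ref{prop:check-reach-fixed}. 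We reduce each path formula to reachability in the standard way. For $\X p$ we use a one-step target. For $p_1\until p_2$ we make the states violating $p_1\vee p_2$ absorbing and non-target. For $\release$ we use the duality $\coop{\coalition}^{\geq d}(p_1\release p_2)\equiv$ the opponents reach $\neg p_2$ avoiding $p_1$ with probability $\leq 1-d$.

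\textbf{Step 3: the main obstacle.} The hard part is verifying in $\NP$ that the universal statement holds for all valuations and all opponent strategies. Theorem~\ref{prop:check-reach-fixed} places this problem only in $\NPInter$, so its $\NP$ certificate must be composed with our guess. Concretely, we also guess the worst-case memoryless opponent strategy in the product, as in the proof of Theorem~\ref{prop:check-reach-fixed}. Its optimality, together with the threshold, is expressed through Bellman equations as a $\forall\exists\Reals$ sentence with fewer than $m$ variables. Such a sentence is decidable in polynomial time, so the whole check is a single $\NP$ guess followed by polynomial-time verification. Care is needed to ensure the guessed opponent strategy is worst-case for every valuation simultaneously; when this fails, we instead encode the opponent values as existentially quantified reals inside the formula. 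Finally, the top-level Boolean combination is evaluated on the guessed truth values, which yields membership in $\NP$.
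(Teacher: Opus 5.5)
Your core argument is the paper's: guess deterministic $\compvar$-bounded coalition strategies and fix them. This gives a \param-MDP with the same parameter set, so still fewer than $m$ parameters. Then reduce $\X$, $\until$, $\release$ to reachability and verify the universal reachability property with Theorem~\ref{prop:check-reach-fixed}. The paper closes with $\NP^{\NPInter}=\NP$. Concatenating your strategy guess with the $\NP$ certificate of Theorem~\ref{prop:check-reach-fixed} is equivalent and more elementary, since only one positive query is made. Because $\Ag$ is fixed globally in the paper, $|\coalition|$ is a constant, so unary $\compvar$ already makes the product polynomial; you need not fix $\compvar$. What needs repair are your two hedges.

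In Step~1, guessing truth values for all coalition subformulas but witnesses only for the true ones is unsound. A subformula guessed false is never checked, so a negated occurrence can be made true by a wrong guess. This cannot be repaired inside $\NP$. The clause-selection construction in the paper's appendix, with no universal variables, reduces SAT to $\coop{\coalition}^{\geq 1}\F R$. That instance has a single agent, memoryless observation-based strategies and no parameters, so $\neg\coop{\coalition}^{\geq 1}\F R$ is $\coNP$-hard. The lemma can therefore only mean what the paper proves and uses: a single formula $\coop{\coalition}^{\bowtie d}\psi$ (positive Boolean combinations also work). Negation is handled by the oracle in Theorem~\ref{prop:rob-patl-bound}. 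Commit to that reading instead of guessing truth values.

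In Step~3 you need not reopen Theorem~\ref{prop:check-reach-fixed}; invoking it as a black box is the whole argument. Your fallback of existentially quantifying the opponents' state values breaks the bound. It adds one real variable per product state, $|\setpos|\cdot\compvar^{|\coalition|}$ in all. The polynomial-time decision for a fixed number of variables then no longer applies, and that branch yields no $\NP$ bound. The worry behind the fallback is legitimate. Take one parameter $x$ and two actions reaching the target with probabilities $x$ and $1-x$: the worst-case memoryless choice flips at $x=1/2$, so no single guessed strategy is worst-case for every valuation. However, this concerns the proof of Theorem~\ref{prop:check-reach-fixed}, which the paper's proof of this lemma simply takes as given. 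Your argument is sound exactly to the extent that theorem is, and the fallback does not change that.
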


We can then conclude as in Theorem~\ref{prop:rob-patl}. 

\begin{restatable}{theorem}{robpatlbound}\label{prop:rob-patl-bound}
    The following problem is in $\NP^{\NP} = \Sigma_2^{\PTIME}$: 
    Given a \param-system $\System = ( \setpos, \legal, X,  \trans, \val, \allowbreak (\funobs_{\ag})_{ \ag\in \Ag}, C)$, where the number of parameters is fixed and less than $m$, 
    model check parametrically a \PATL formula
    using deterministic $\compvar$-bounded strategies for the coalitions. 
\end{restatable}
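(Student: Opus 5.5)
We follow the same scheme as for Theorem~\ref{prop:rob-patl}, replacing the unit case Lemma~\ref{prop:rob-patl-unit} by Lemma~\ref{prop:patl-bound-unit}. The plan is a bottom-up labelling of the states of $\System$ by the subformulas of $\varphi$, ordered by nesting depth of coalition operators. Atomic propositions and Boolean combinations are handled directly. For an innermost strategic subformula $\coop{\coalition}^{\bowtie d}\psi$, the formula $\psi$ contains no coalition operator. Once the subformulas of $\psi$ are replaced by fresh atomic propositions labelling the states that satisfy them, the query has no chaining of coalition operators. Lemma~\ref{prop:patl-bound-unit} then decides, for each state $\pos$, whether $\System[\Vval],\pos\models_{\compvar}\coop{\coalition}^{\bowtie d}\psi$ holds for all well-defined $\Vval$, and this is an $\NP$ query.

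A subtle point must be settled first. Parametric satisfaction quantifies over valuations globally, so the labelling of a subformula could in principle depend on $\Vval$, while a robust labelling only records states where the subformula holds for \emph{all} valuations. I would handle this in two steps.
\begin{itemize}
\item Label, for each state and each strategic subformula, two verdicts: ``true for all valuations'' and ``false for all valuations''. Negation swaps these two verdicts.
\item Check that the outer modality in Lemma~\ref{prop:patl-bound-unit} is monotone in its atomic inputs, so that under-approximating the satisfaction set via robust labels is sound.
\end{itemize}
If monotonicity fails because of negations or $\leq$ thresholds, the fallback is to guess the labelling itself, as a function from states to truth values of subformulas, inside the outer $\NP$ machine. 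The oracle is then used to certify each guessed entry in both directions, invoking the unit lemma on the formula or on its dual. Here we use that $\neg\coop{\coalition}^{\geq d}\psi$ can be phrased as a universal statement over coalition strategies answered by a co-$\NP$ call, which an $\NP$ oracle answers equally well.

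The computation then runs in polynomial time relative to an $\NP$ oracle. There are at most $|\setpos|\cdot|\varphi|$ oracle calls, each on an instance of polynomial size whose number of parameters is still bounded by $m$. The fixed-parameter hypothesis is preserved because relabelling changes only $\vval$, never $X$ or $C$. This gives membership in $\PTIME^{\NP}\subseteq\NP^{\NP}=\Sigma_2^{\PTIME}$. The guessing variant stays in $\NP^{\NP}$ directly.

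The main obstacle is the interaction between the universal quantification over valuations and the nesting. A genuinely parametric semantics would require a single valuation shared across all nested levels, whereas the labelling treats each level robustly and independently. I would first resolve this by arguing that the problem, as defined, asks whether $\System[\Vval],\pos\models_{\compvar}\varphi$ for every $\Vval$. This reduces to checking, for each fixed $\Vval$, a labelling that is consistent for that $\Vval$. The unit lemma's argument (guess deterministic $\compvar$-bounded coalition strategies, then verify a fixed-parameter universal reachability condition in $\NPInter$ as in Theorem~\ref{prop:check-reach-fixed}) would then be reused with the whole nested formula. Concretely, guess all coalition strategies for all subformulas and states up front in the outer $\NP$ layer. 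Then use a single $\coNP$ check, a $\forall$-formula over the $m$ parameters and the opponents' strategies, to verify that every induced labelling is consistent. The bounded parameter count keeps this check within the oracle class.
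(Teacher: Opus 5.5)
Your first paragraph together with your ``fallback'' is the paper's argument. The proof of Theorem~\ref{prop:rob-patl}, rerun with Lemma~\ref{prop:patl-bound-unit}, guesses truth values for the strategic subformulas and certifies each one with the $\NP$ oracle: a positive entry is an instance of the lemma, and a negative entry is its complement, i.e.\ a $\coNP$ query. Your deterministic bottom-up variant gives the same bound via $\PTIME^{\NP}\subseteq\NP^{\NP}$. Your remark that relabelling changes only the labelling, never $X$ or $C$, is the right reason the fixed-parameter hypothesis survives.

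Note what this procedure actually decides. Lemma~\ref{prop:patl-bound-unit} guesses the coalition strategy \emph{before} quantifying over parameters. So each $\coop{\coalition}^{\bowtie d}\psi$ is read as $\exists\sigma\,\forall\Vval\,\forall\mu$, and inner subformulas enter outer objectives only through these two-valued robust labels. The proof of Theorem~\ref{prop:rob-patl-gen} spells this out, with negative case $\forall\sigma'\,\exists P'\,\exists\mu'$. Under that reading there is no third verdict and no monotonicity question, so your two-verdict detour is unnecessary.

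The gap is your last paragraph, which you present as the resolution of the ``main obstacle''. Guessing all coalition strategies up front and running one universal check does not decide the literal reading ``$\System[\Vval],\pos\models_{\compvar}\varphi$ for every $\Vval$''. The procedure fails for three reasons.
\begin{itemize}
\item It turns $\forall\Vval\,\exists\sigma$ into $\exists\sigma\,\forall\Vval$, which is strictly stronger. Take one parameter $x\in[0,1]$ and a state $\pos$ where action $\alpha$ reaches a $p$-state with probability $x$ and action $\beta$ with probability $1-x$. Then $\coop{\{a\}}^{\geq 1/2}\X p$ holds at $\pos$ in $\System[\Vval]$ for every $\Vval$, but no single deterministic strategy works for all $x$.
\item False entries of the labelling, arising under negations or at states that must avoid an outer target, need a universal quantifier over coalition strategies. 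An up-front guess cannot supply this, and your sketch never introduces it.
\item Once labels vary with $\Vval$, the outer objective has a $\Vval$-dependent target set. Theorem~\ref{prop:check-reach-fixed}, which concerns a fixed reachability target, then no longer justifies your ``single $\coNP$ check''.
\end{itemize}

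Your worry about a shared valuation is legitimate: the example above also separates the literal definition from what Lemma~\ref{prop:patl-bound-unit} decides. The paper does not resolve this issue either; it proves the bound for the per-subformula robust reading. Drop the last paragraph and state explicitly that this robust reading is the semantics being decided.
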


We now consider the case of arbitrary parameters:


\begin{restatable}{theorem}{robpatlgen}\label{prop:rob-patl-gen}
      The following problem is in $\exists\forall\exists\Reals = \Sigma_3^{\Reals}$: Given system $\System = ( \setpos, \legal, X,  \trans, \allowbreak  \val,(\funobs_{\ag})_{ \ag\in \Ag}, \allowbreak C)$ and a memory bound $\compvar\in\mathbb{N}_+$, model check parametrically a  \PATL  formula $\varphi$ using randomized $\compvar$-bounded strategies. 
\end{restatable}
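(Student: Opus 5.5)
We prove the statement by induction on the structure of $\varphi$. For each subformula $\psi$ we build a single $\Sigma_3^{\Reals}$ sentence whose truth, at a given state, says that $\psi$ holds in $\System[\Vval]$ for every well-defined valuation $\Vval$. Boolean connectives and atomic propositions are immediate, so the work lies in the coalition operators.

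Since the valuation is universally quantified at the outermost level, we first rewrite the problem as ``for all well-defined $\Vval$, $\System[\Vval],s\models_\compvar\varphi$''. We then treat the nested coalition operators bottom-up inside this fixed valuation. For each subformula $\psi$ and each state $q$ we introduce a Boolean-valued real variable $t_{\psi,q}$, constrained by $t_{\psi,q}(1-t_{\psi,q})=0$, which encodes the truth of $\psi$ at $q$ in $\System[\Vval]$. The innermost strategic subformulas then have propositional targets, and each outer operator uses the $t$-variables of its arguments as target labels. The step needs care because $\System[\Vval],q\models \psi$ may be true or false for a fixed $\Vval$. We must therefore express both $t_{\psi,q}=1 \Rightarrow \psi$ holds and $t_{\psi,q}=0\Rightarrow \neg\psi$ holds, i.e.\ both the existential and the dual universal form of the coalition operator.

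For $\coop{\coalition}^{\bowtie d}\psi'$ at $q$ with $t=1$, the encoding has three steps.
\begin{itemize}
\item We existentially guess the randomized $\compvar$-bounded scheduler of each agent in $\coalition$: real variables for $start$, $\Delta$ and $act$, constrained to be distributions over $Q\times O$. This is polynomially many variables, because $|Q|=\compvar$.
\item We form the product of $\System[\Vval]$ with these schedulers. This is an MDP for the opponents $\Ag_{-\coalition}$, whose transition probabilities are polynomials in the parameters and the scheduler variables. Observation-based consistency is automatic, since the variables are indexed by observations.
\item We require that every opponent strategy yields probability $\bowtie d$ for $\psi'$ (an until, release or next objective with targets given by the $t$-variables). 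Against a fixed finite-memory coalition, the opponents face a perfect-information MDP, and memoryless deterministic strategies suffice for until/release (reachability and safety after the standard graph preprocessing). Following Theorem~\ref{prop:check-reach-gen}, we state optimality of the opponents' worst case through Bellman equations, with value variables $v_{(q,m)}$ satisfying the min/max fixpoint and the threshold. The qualitative part (states of probability $0$ or $1$) can be expressed by guessing the graph-based sets and checking them with polynomial constraints. This yields an $\exists$ block of values satisfying universally checked inequalities.
\end{itemize}
The case $t=0$ is the negation of the existential statement: for all coalition schedulers, there exist an opponent response and values witnessing violation. This is a $\forall\exists$ formula.

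Assembling the pieces, the sentence reads $\forall \Vval$ (well-defined, satisfying $\constraints$) $\exists$ ($t$-variables, coalition schedulers for true instances, opponent responses and value variables) $\forall$ (coalition schedulers for false instances, and the universal parts of the Bellman checks) $\exists$ (witnesses for those false instances). This is $\Pi$-shaped at the top, so we take the complement problem instead: decide whether some valuation falsifies $\varphi$. That complement is $\exists\Vval\,\exists t\,\ldots$, and merging adjacent blocks of the same kind gives an $\exists\forall\exists$ prefix. The statement as given is membership in $\Sigma_3^{\Reals}$, so we aim to arrange the quantifiers directly in the order $\exists\forall\exists$. The outer $\exists$ guesses the truth labels $t$ together with the coalition schedulers for all true instances. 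The $\forall$ ranges over valuations and over schedulers for the false instances. The inner $\exists$ supplies opponent values and violating witnesses.

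For this arrangement to be valid, the labels and the chosen schedulers must work uniformly over all valuations. That is not true in general, since truth values depend on $\Vval$. To fix this, we make the $t$-variables and schedulers polynomial-size functions of $\Vval$ through Skolemization where that is possible. If it is not, we fall back to the valuation-first ordering and argue block merging.

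\textbf{Main obstacle.} The hardest step is this quantifier bookkeeping: keeping the dependence on the valuation and the mixed true/false labels within three alternations, while keeping the Bellman encoding of optimal opponent responses polynomial-size, including the probability-$0$/$1$ preprocessing that depends on the parameter values. We handle the latter by guessing the support graph of each instantiated transition as additional Boolean constraints (nonzero versus zero), as in the Remark on discontinuities.
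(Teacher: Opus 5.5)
There is a genuine gap, and it sits in the step you yourself flag as the main obstacle: the proposal never produces an $\exists\forall\exists$ sentence. You read the problem as ``first fix $\Vval$, then evaluate $\varphi$ in $\System[\Vval]$''. Under that reading the labels $t_{\psi,q}$ and the winning schedulers may depend on $\Vval$, and you correctly observe that $\exists t\,\exists\sigma\;\forall\Vval\ldots$ is then unsound. Neither of your repairs closes this.

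\begin{itemize}
\item Skolemizing $t$ and $\sigma$ into ``polynomial-size functions of $\Vval$'' is not available. First-order real arithmetic cannot quantify over functions, and nothing guarantees that the map from valuations to correct labels and winning schedulers is described by polynomially many real coefficients. ``Where that is possible'' is not a procedure.
\item The valuation-first sentence $\forall\Vval\,\exists\,\forall\,\exists$ has four strictly alternating blocks, so there is nothing to merge.
\item Your complement trick ($\exists\Vval\,\exists t$, with $t$ the unique correct labelling and root label $0$) at best places the \emph{complement} in $\Sigma_3^{\Reals}$. That puts the problem in $\Pi_3^{\Reals}$, which is not the claimed class.
\end{itemize}

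The missing idea is the semantic choice the paper's proof makes. Your reading matches the literal wording of the parametric model-checking definition. The paper's proof instead reads each coalition operator robustly, just as in Lemma~\ref{prop:rob-patl-unit}, which guesses the coalition strategy first and only then ranges over all perturbations. Concretely, $\coop{\coalition}^{\bowtie d}\psi$ holds iff one $\compvar$-bounded coalition strategy works for every well-defined valuation and every opponent response. It fails iff every coalition strategy is beaten by some valuation together with some response.

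Under this reading, the truth value of each coalition subformula does not depend on the valuation. That is what licenses the paper's prefix $\exists a_1,\ldots,a_n\,\exists\sigma_1,\ldots,\sigma_n\;\forall P_T\,\forall\mu_1,\ldots,\mu_n\,\forall\sigma'_1,\ldots,\sigma'_n\;\exists P'_{T,1},\ldots,P'_{T,n}\,\exists\mu'_1,\ldots,\mu'_n$. Opponents are restricted to memoryless deterministic strategies there.

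Your per-state labels and your Bellman and graph-preprocessing encoding are more careful than the paper's per-subformula $a_i$. They fit into this prefix unchanged, with your value variables going into the last existential block. What you must change is where the valuation quantifier sits: inside each coalition operator, not in front of the whole formula. Under your valuation-first semantics, your own construction only yields $\Pi_3^{\Reals}$, and reaching the stated $\Sigma_3^{\Reals}$ bound would need a genuinely new argument.
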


\balance
\section{Discussion on Application}
 \label{sec:application}



Cyber-physical systems provide a natural domain where robust strategic
reasoning under partial observability is indispensable \citep{tushar2023survey}. Examples
include smart cities, transportation infrastructures, and industrial
automation, where software controllers interact with a physical
environment whose dynamics are only approximately known, and where
failures or attacks may perturb the expected behaviour of the system.
In such settings, 
%
systems need to be robust, despite noise and disturbances, to function efficiently and avoid the exploitation of potential risks and threats from attackers \citep{Quanyan2015}. 


A representative scenario is that of green buildings in a smart city.
Buildings communicate and coordinate to share resources such as energy
and water efficiently ~\citep{tushar2021peer}. Each building has private information (e.g.,
resource levels or user preferences) and local objectives (e.g.,
minimising costs or contributing resources to the community), while the
overall system should maintain global quality-of-service and safety
constraints~\citep{tushar2023survey}. 
%
This cooperative behaviour can be captured by PATL
specifications. For instance, the formula
 $ \langle\!\langle C\rangle\!\rangle_{\ge 0.8}
  \mathbf{G}\Bigl(\bigwedge_{a \in C} \neg \mathit{energyLow}_a\Bigr)$
expresses that a coalition $C$ of buildings has an observation-based,
bounded-memory strategy to ensure, with probability at least $0.8$,
that none of its members reaches an unacceptably low energy level.
Our robust semantics then asks whether this guarantee holds for \emph{all}
instantiations of the transition probabilities within the prescribed
uncertainty set. This captures, for example, deviations in the
efficiency of solar panels, inaccuracies in predicted heating demand,
or model updates obtained from new data.

%

%


The same framework also supports security-oriented specifications for
cyber-physical infrastructures \citep{Quanyan2015}. Modelling the interaction between
attackers and defenders as a stochastic MAS, one can express properties
such as
 $ \neg\langle\!\langle \{a\}\rangle\!\rangle_{\ge 0.2}
  \mathbf{F}\bigl(\mathit{access}_a \wedge \neg\mathit{securityCheck}_a\bigr)$,
which states that an attacker $a$ does \emph{not} have a strategy to
gain unauthorized access with probability at least $0.2$. Robust model
checking then verifies that this remains true even if the probabilities
of successful attacks or detection change. 

From a modeling perspective, our robust stochastic MAS can be viewed
as a multi-agent generalization of classical uncertain models such as
interval and parametric MDPs, enhanced with coalitional strategic
operators and observation-based, bounded-memory strategies. The
complexity results in Table~1 show that, under these general conditions,
robust verification is still decidable with complexities that are
compatible with existing verification technology. This suggests that
our framework can serve as a semantic foundation for future
tool-supported analysis of robust strategies in realistic
cyber-physical applications.
\medskip

\section{Conclusion}\label{sec:conclusion}

This paper addresses the problem of verifying the robustness of strategies for agents acting in stochastic MAS whose transition probabilities are uncertain. Focusing on observation-based, bounded-memory strategies represented by finite automata, we have
introduced a robust variant of the model-checking problem for $\PATL$ and established upper bounds on its complexity under three different notions of perturbation: $\varepsilon$-bounded deviations of each transition, parametric models with a fixed number of parameters, and parametric models with an unbounded number of parameters. 
On the quantitative side, our results clarify the computational cost of verifiable robustness in MAS. In particular, we show that robust strategic reasoning remains decidable even in the presence of coalitions, imperfect information, and parametric uncertainty, while highlighting the sharp complexity jump induced by unboundedly many parameters. Conceptually, our work connects several previously separate lines of
research: verification of stochastic MAS in strategic logics, interval
and parametric models for uncertain probabilities, and robust strategy
synthesis. 

We left open the case of lower bounds; while some are unrealistically high, we think it would be possible to get lower bounds on the robust model checking of \PATL.
As future work, we aim to consider the problem of computing the robustness level of a given strategy, that is, the maximal perturbation under which the strategy is still successful. This would open the question of finding the most robust strategy in a stochastic MAS.  
\begin{acks}
This research has been supported by the DFG Project POMPOM (KA 1462/6-1) and ANR-22-CE48-0012.
\end{acks}

\bibliographystyle{ACM-Reference-Format} 
\bibliography{ref}

\iffullpaper 
\clearpage
\appendix
\label{appendix}
\section{
Appendix}

\subsection{Proof of Proposition 
\ref{prop:strat_gen}}\label{appendix:strat_gen}

\propstratgen* 
\begin{proof}
    Given a $k$-sequential strategy $\sigma:\funobs_{\ag}(O^k)\rightarrow\distribution(\Act)$, we build the scheduler $\sigma'$ as $(\{O\times \{\varnothing\}\}^{k-1},act,\Delta,start)$, where 
$\Delta(q,o) = \last^{k-2}(q)\cdot o$ where $\last^{k-2}(q)$ is the sequence of the $k-2$ last elements of $q$, 
$act(q,o) = \sigma(q\cdot o)$ if defined and anything otherwise,  
and $start(o) = \{\varnothing\}^{k-1}\cdot o$. This scheduler has the output $\sigma(o)$ on any initial state $o$, and by induction, if $\sigma$ and $\sigma'$ have the same output after a sequence $q\in O^*$, they have the same output on $q\cdot o$. 

For the other side, let us consider the system 
$\System_g$ on Figure~\ref{fig:mdp_gen}. The only atomic propositions are on states $q_1$ and $q_5$ which are labeled with $blue$, while $q_2$ and $q_6$ are labeled with $red$. In most states, the two actions $r$ and $b$ yield the same result, except in $q_4$ where $b$ leads to a blue sink and $r$ leads to a red sink. We now consider the \LTL formula $\varphi_g = \G ((blue\Rightarrow \G \F blue) \wedge (red\Rightarrow \G \F red))$. This formula states that if a blue state is visited even once, then infinitely many blue states are visited, and the same holds for red states. We consider the scheduler of Figure~\ref{fig:control_gen}, playing action $b$ in state $q_b$ and action $r$ in state $q_r$. After any history on $\System_g$ it will be in state $q_b$ (respectively $q_r$) if and only if the blue state $q_1$ (respectively the red state $q_2$) has been visited, and will play the corresponding action, ensuring $\varphi_g$ with probability $1$. On the other hand, given any memory-bound $k$, for any $k$-sequential strategy, all histories that stay in $q_3$ for $k$ steps should yield the same action in $q_4$, whether $q_1$ or $q_2$ was visited, which means $\varphi_g$ cannot be satisfied with probability $1$.

\end{proof}
We highlight that our example uses a deterministic and perfect-information scheduler-based strategy. ~\citep{DBLP:journals/corr/abs-1006-1404} details how randomization leads to even more general strategies, hence our choice to call our strategies ``general'', or to simply refer to them as ``strategies''. We note that the two definitions coincide on infinite-memory strategies.




\subsection{Remark on a possible $\Sigma_2^{\PTIME}$-hardness result (follow-up from Theorem~\ref{prop:rob-patl})}
We show that $\varepsilon$-robust model-checking a \PATL formula on a system can be done in $\Sigma_2^{\PTIME}$ in Theorem~\ref{prop:rob-patl}. An approach to show $\Sigma_2^{\PTIME}$-hardness would be to encode $2\mathbf{QBF}$ into it, which consists in the following problem: decide if a formula of the form $\exists X\forall Y B(X,Y)$ holds, where $B$ is a Boolean combination of atoms in the sets $X$ and $Y$. 

It is simple to encode such a formula into a system with two agents (named $a_{\exists}$ and $a_{\forall}$) whenever $B$ is a CNF formula: start in a state associated to a random clause of $V$, and make agent $a_{\exists}$ successively choose valuations for variables in $X$ (states are labelled by the variable to choose, so this agent cannot play differently depending on the clause). If any of the valuations satisfies the clause, go to some winning state $s_{\top}$, otherwise proceed similarly for agent $a_{\forall}$ on variables in $Y$. We then check the formula $\coop{a_{\exists}}^{=1} \F s_{\top}$: agent $a_{\exists}$ has a (memoryless) strategy surely reaching $s_{\top}$ if and only if formula $\exists X\forall Y B(X,Y)$ holds. 

The problem with this approach is that $2\mathbf{QBF}$ on CNF formulas is in $\NP$: it is possible to guess a valuation for $X$, and then assume in every clause independently that $Y$ takes the worst possible value, which can be checked in polynomial time. 

When considering an arbitrary Boolean formula $B$, the construction above does not work anymore. It is possible to have agent $a_{\exists}$ decide on a (memoryless and observation-based) strategy assigning valuations to variables in $X$, and then go through $B$, where randomness selects either the right or left subformula when meeting a $\vee$, and letting agent $a_{\forall}$ choose the next subformula when meeting a $\wedge$, until meeting a variable. The agent associated to this variable decides on a valuation, if this valuation satisfies the subformula then the system goes to state $s_{\top}$. If agent $a_{\forall}$ had imperfect information, $\coop{a_{\exists}}^{>0} \F s_{\top}$ would hold if and only if $\exists X\forall Y B(X,Y)$ holds, but this is not the case, and $a_{\forall}$ can choose different valuations for the same variable in different parts of the formula. 

There is no obvious way to go around this problem using \PATL as defined. While small modifications of the problem considered could lead to a $\Sigma_2^{\PTIME}$-hardness result, they would not easily lead to $\Sigma_2^{\PTIME}$-completeness. 

If instead of \PATL we used a variant of $\ATL^*$ it would be possible to have agent $a_{\exists}$ display all their chosen valuations, then have agent $a_{\forall}$ do the same, while stating that once a valuation is chosen, the same valuation should always be chosen ($a\Rightarrow \G a$). This would, however, prevent us from using our model checking technique, since we couldn't easily transform an $\ATL^*$ formula into reachability and invariance objectives. 

Another approach would be to have agent $a_{\forall}$ also use observation-based memoryless strategies, so that the choice of valuations couldn't depend on the position in the formula. This would change the decision procedure: we could not use Lemma~\ref{prop:rob-patl-unit} and Theorem~\ref{prop:check-reach-eps} anymore, since they allow arbitrary strategies. On a POMDP, deciding whether a reachability formula is satisfied by all memoryless deterministic strategies with at least a given probability is $\coNP$-complete (as the dual of the existential case in Proposition 15 of \citep{DBLP:journals/jcss/JungesK0W21}, that we recall as Proposition~\ref{prop:pomdp-mem}), while it is polynomial for MDPs, this would in turn greatly raise the complexity of our proposed model checking technique.

\begin{proposition}[From~\citep{DBLP:journals/jcss/JungesK0W21}]
\label{prop:pomdp-mem}
    For pMDPs with fixed parameter, $\exists \mathrm{REACH}^{\bowtie}_{\star}$ is in \NP.
\end{proposition}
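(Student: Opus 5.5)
The plan is a guess-and-check argument. The certificate is a memoryless deterministic observation-based strategy plus the graph structure it induces. A fixed number of parameters is what makes the remaining check polynomial.

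First, nondeterministically guess a memoryless deterministic strategy $\sigma$ for the single agent. Concretely, this is a map from observations to legal actions, consistent with $\legal$ on every state of each observation class, and it has polynomial size. Applying $\sigma$ turns the \param-MDP (or \param-POMDP) into a \param-MC $\System^\sigma$ whose transition probabilities are the given rational functions over $X$, with $|X|<m$.

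Second, handle the fact that the graph of $\System^\sigma[\Vval]$ depends on which transitions evaluate to $0$. To do this, also guess a support set $E$ of the transitions of $\System^\sigma$. Add the constraints $\trans(\pos,\mov,\pos')=0$ for the transitions outside $E$ and $\trans(\pos,\mov,\pos')>0$ for those in $E$. For this fixed graph, compute by standard graph analysis the set $S_0$ of states that cannot reach $R$ and the set of states in $R$. The reachability probabilities $(p_\pos)_{\pos}$ are then the unique solution of the linear system
\[
p_\pos = 1 \ (\pos\in R),\qquad p_\pos = 0\ (\pos\in S_0),\qquad p_\pos=\textstyle\sum_{\pos'} \trans(\pos,\mov,\pos')\,p_{\pos'}\ \text{otherwise}.
\]
On the remaining states $I-P$ is nonsingular.

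Third, solve this system symbolically, using fraction-free (Bareiss) Gaussian elimination over $\mathbb{Q}[X]$. This yields the initial-state probability as a rational function $f=g/h$ with $h\neq 0$. The result is a conjunction of polynomial constraints in at most $m$ real variables:
\begin{itemize}
\item the constraints in $C$;
\item well-definedness (non-negativity and summing to $1$ for each row);
\item the support constraints;
\item $g \bowtie d\cdot h$ (after fixing the sign of $h$, which we may also guess).
\end{itemize}
By Renegar's or Basu--Pollack--Roy's decision procedure, existential sentences with a fixed number of variables are decided in time polynomial in the number of polynomials, their degree, and their coefficient bit-size. Hence the check runs in polynomial time, which places the problem in $\NP$.

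I expect the main obstacle to be the third step: bounding the size of $f$. Bareiss elimination keeps all intermediate entries as minors of the original matrix, so their degrees are polynomially bounded, roughly $|\setpos|$ times the maximal input degree. Because the number of variables is fixed at $m$, the number of monomials of polynomial degree is also polynomial ($O(D^m)$). Hadamard-type bounds keep the coefficient bit-sizes polynomial. If this bound turned out delicate, I would instead keep the $p_\pos$ as extra existential variables. That, however, breaks the fixed-variable count, which is why I would insist on the elimination route.
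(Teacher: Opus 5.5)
The paper does not prove this statement; it imports Proposition~15 of Junges et al. Your argument covers the \emph{angelic} instances, where both the valuation and the strategy are existentially quantified, and there it is the standard proof: guess a memoryless deterministic strategy, obtain the reachability probability as a rational function by fraction-free elimination (the Baier et al.\ technique), and decide a sentence in at most $m$ real variables in polynomial time. You should say explicitly why guessing a memoryless deterministic strategy loses nothing: for each fixed valuation, optimal reachability strategies of an MDP can be taken memoryless deterministic. Guessing the support is a sensible way to handle well-defined valuations that delete edges.

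The gap is that $\exists\mathrm{REACH}^{\bowtie}_{\star}$ ranges over every combination of comparison operator and max/min objective. The paper itself speaks of ``the existential case'' of this proposition, which implies there are other cases. The statement therefore also contains \emph{demonic} instances, such as: is there a well-defined $u$ with $\min_\sigma \mathrm{Pr}^{\sigma}_{u}(\text{reach } R)\ge d$? This is an $\exists u\,\forall\sigma$ question. Exhibiting one strategy whose probability meets the threshold says nothing about the other strategies at the same $u$, so your certificate is unsound for these instances.

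The missing ingredient is the Bellman technique of Junges et al.'s Theorem~12, which the paper also invokes in the proof of Theorem~\ref{prop:check-reach-fixed}. Guess the strategy $\sigma$ that is meant to be optimal for the adversary, and certify its optimality at $u$:
\begin{itemize}
\item compute the rational functions $f_s$ for \emph{every} state, not only the initial one;
\item for each $s\notin R$ and each legal action $a$, add the denominator-cleared inequality $f_s \le \sum_{s'} P(s,a,s')\,f_{s'}$, or $\ge$ for a max objective.
\end{itemize}
These are still polynomially many constraints of polynomial degree in at most $m$ variables, so membership in \NP{} survives.

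For min objectives the local inequalities alone are not enough. Take a state with a self-loop action and an action into $R$, with $\sigma$ choosing the latter. It satisfies the inequalities with $f_s=1$, although the minimum is $0$. You must also force $f_s=0$ on the states whose minimal reachability probability is $0$. That set is a graph property of the whole parametric MDP, so the support you guess must cover all its transitions, not only those of the induced chain.
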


\subsection{Polynomial Hierarchy and Theory of the Reals (background for  
Theorem~\ref{prop:check-reach-fixed})}
\label{appendix:th_reals}

\paragraph{Polynomial Hierarchy. }
The classical complexity class $\NP$ can be defined as the set of decision problems for which the \texttt{yes} instance can be \emph{verified} in polynomial time using a polynomial-size witness. The $\mathbf{SAT}$ problem is a canonical $\NP$-complete problem. $\coNP$ is the dual complexity class, where the \texttt{no} instance can be verified similarly. 
This has been expanded to a hierarchy using \emph{oracles}. Given two complexity classes, $\mathbf{A}$ and $\mathbf{B}$, the class of problems that belong to class $\mathbf{A}$ when given an oracle for class $\mathbf{B}$ is denoted $\mathbf{A}^{\mathbf{B}}$. When considering the \emph{polynomial hierarchy}~\cite{DBLP:journals/tcs/Stockmeyer76}, $\NP$ is denoted as $\Sigma_1^{\PTIME}$, and $\coNP$ as $\Pi_1^{\PTIME}$. When, for some $k$ a problem is in $\PTIME$ with a $\Sigma_k^{\PTIME}$ oracle (or equivalently a $\Pi_k^{\PTIME}$ oracle) it is in $\Delta_{k+1}^{\PTIME}$. Similarly, the complexity class $\NP^{\Sigma_k^{\PTIME}}$ is denoted $\Sigma_{k+1}^{\PTIME}$, and  $\coNP^{\Pi_k^{\PTIME}}$ is denoted $\Pi_{k+1}^{\PTIME}$. These complexity classes are widely used in model checking, in particular model checking $\ATL$ with imperfect information and imperfect memory is $\Delta_2^p$-complete~\citep{DBLP:conf/eumas/JamrogaD06} and a similar result exists for Nat\ATL with memoryless strategies~\citep{natStrategy}.

While the polynomial hierarchy can be understood as finding values for Boolean-valued formulas, the theory of the reals considers real-valued formulas.

The \emph{existential theory of the reals}, $\exists\Reals$, is the class of problems that can be polynomially reduced to deciding the truth of a first-order formula
$\Phi \equiv \exists x_1 \dots \exists x_n\ P(x_1,\dots,x_n)$
where $x_i$ are interpreted over the reals $\Reals$, and $P$ is a 
Boolean function of atomic predicates of the form $f_i(x_1,\dots,x_n) \ge 0$ or $f_i(x_1,\dots,x_n) > 0$, with each $f_i$ being a polynomial with rational coefficients. 
$\exists\Reals$ is known to be in \PSPACE and \NP-hard~\citep{Canny88PSPACE}.
Many problems involving both partial observation and probabilities are related to this class, which has led to a recent surge of interest. In particular, finding a probabilistic strategy reaching a target with at least a given probability and using some bounded memory is $\exists\Reals$-complete~\citep{DBLP:phd/dnb/Junges20}. See also~\citep{schaefer2024existential} for a survey on $\exists\Reals$. 
Similar problems with more alternation of quantifiers also create a hierarchy. For an integer $k$, $\Sigma_k^{\Reals}$ designates the problem of deciding formulas starting with $\exists$ and with $k-1$ quantifier alternations. Hence $\Sigma_1^{\Reals} = \exists\Reals$, and $\Sigma_2^{\Reals} = \exists\forall\Reals$. The classes $\Pi_k^{\Reals}$ are the dual, for formulas starting with a $\forall$ quantifier. While several problems are in $\Sigma_2^{\Reals}$~\citep{DBLP:journals/mst/SchaeferS24}, finding complete problems has been challenging, but the compact escape problem has been shown to be such a complete problem~\citep{DBLP:conf/mfcs/DCostaLNO021}. Interestingly, \citep{DBLP:journals/mst/SchaeferS24} and~\citep{DBLP:journals/dcg/DobbinsKMR23} consider that the three main settings leading to a jump from the first to the second level of the hierarchy are universal extension, imprecision, and robustness. We are in the third, since we consider multiple agents with imperfect information that need to follow robust strategies.

\subsection{Proof of Theorem \ref{prop:check-reach-eps}}

The proof of Theorem \ref{prop:check-reach-eps} will use the following results: 

\begin{proposition}[From ~\citep{DBLP:journals/ipl/ChenHK13}]
\label{prop:coincide}
    The UMC and IMDP semantics coincide when considering reachability objectives. 
\end{proposition}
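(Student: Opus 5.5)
The plan is to compare the two semantics through their extremal reachability values. In the UMC semantics, nature fixes one well-defined valuation once and for all. In the IMDP semantics, nature may choose a new distribution inside the intervals each time a state is visited, possibly depending on the history. Since the problems in Theorem~\ref{prop:check-reach-eps} quantify universally over both the controller's strategies and nature's resolutions, it suffices to show that the infimum (resp.\ supremum) of the probability of reaching $R$ is the same under both semantics, and that it is attained in both.

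\emph{Step 1: the easy inclusion.} Every UMC instantiation is a particular IMDP resolution: the memoryless one that picks, at each $(\pos,\mov)$, the fixed distribution $\Vval(\trans(\pos,\mov,\cdot))$. Hence every probability achievable under the UMC semantics is achievable under the IMDP semantics. The IMDP infimum is therefore at most the UMC infimum, and the IMDP supremum is at least the UMC supremum.

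\emph{Step 2: reduce the IMDP to a finite MDP.} For each state $\pos$ and action $\mov$, the set of well-defined distributions is the polytope
\[
P_{\pos,\mov}=\{\distribution\in\Dist(\setpos) : |\distribution(\pos')-\trans(\pos,\mov,\pos')|\leq\varepsilon \text{ for all } \pos'\},
\]
which has finitely many vertices. I would build a finite MDP $\System_V$ whose actions at $\pos$ are pairs $(\mov,v)$, with $v$ a vertex of $P_{\pos,\mov}$. Since both the controller and nature are universally quantified, this merges them into a single optimizing player. A one-step argument then shows that replacing an arbitrary choice in $P_{\pos,\mov}$ by a vertex does not decrease the optimum: the value of a state is linear in the chosen distribution, and a linear function on a polytope attains its extremum at a vertex. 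Formally, I would show that the optimal value vector of $\System_V$ satisfies the Bellman equations of the IMDP, in which the $\min$/$\max$ ranges over all of $P_{\pos,\mov}$, and that it coincides with the least fixed point of those equations. This gives equal extremal values for the IMDP and for $\System_V$.

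\emph{Step 3: memoryless optimality.} By Lemma~10.102 of~\citep{DBLP:books/daglib/0020348}, reachability in the finite MDP $\System_V$ admits optimal memoryless deterministic strategies. Such a strategy picks, at every state, one action $\mov$ and one vertex $v$, independently of the history. The vertex choices define a single valuation, i.e.\ a UMC instantiation, together with a memoryless controller strategy. Its value equals the IMDP optimum, so the UMC optimum is at least as extreme as the IMDP optimum. Combined with Step~1, the values are equal and attained, so the comparisons $\bowtie d$ for $\bowtie\in\{<,\leq,>,\geq\}$ agree under both semantics.

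The main obstacle is Step~2: making rigorous that history-dependent, non-vertex choices by nature cannot beat the vertex MDP. The delicate points are that the optimum must be attained, which matters for strict inequalities, and that zero probabilities may change the graph structure. I would handle this with the least-fixed-point characterisation of reachability values, computed after the usual precomputation of the states with value $0$ and value $1$. I would also note that vertices setting some probabilities to $0$ are simply additional actions of $\System_V$, so a change in the support of a distribution is covered automatically.
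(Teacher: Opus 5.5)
Your argument is correct, but it is not what the paper does. The paper gives no proof of this statement: it imports it from Chen, Han and Kwiatkowska, where it concerns interval DTMCs. In the proof of Theorem~\ref{prop:check-reach-eps} it then reaches the MDP setting by splitting states into action-choice states and distribution-choice states. You instead treat the $\varepsilon$-approximated MDP directly, merging controller and nature into a single player on the finite vertex MDP $\System_V$ and invoking memoryless optimality there. This is the standard basic-feasible-solution argument, lifted to the controller setting the paper actually uses. It has the advantage of exposing the property everything hinges on: each pair $(\pos,\mov)$ carries its own independent uncertainty set. A memoryless choice of one vertex per state, completed on unused actions by e.g.\ the nominal distribution, is therefore a single well-defined valuation. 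That is exactly what breaks with the shared parameters of general parametric systems, which is why Theorems~\ref{prop:check-reach-fixed} and~\ref{prop:check-reach-gen} need other tools. Your Step~2 is also easier than you anticipate. For every vector $x$, the extremum of $\distribution\mapsto\sum_{\pos'}\distribution(\pos')x(\pos')$ over $P_{\pos,\mov}$ equals the extremum over its vertices, so the IMDP and $\System_V$ have literally the same Bellman operator. Under any history-dependent, randomized resolution, the probability of reaching $R$ within $n$ steps is bounded above (for max) or below (for min) by the $(n{+}1)$-st iterate of that operator from the zero vector. In the limit it is bounded by the least fixed point, which is the optimal value of $\System_V$. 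Since strategies of $\System_V$ are themselves IMDP resolutions, the values coincide and are attained. You need neither the value-$0$/$1$ precomputation nor any analysis of how supports change.
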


\begin{proposition}[From ~\citep{DBLP:journals/ipl/ChenHK13}]
\label{prop:idtmc}
    Given an $\varepsilon$-approximated MDP $\mathcal{I}$, computing the parameters and strategy that have maximal (resp. minimal) probability to reach a target is polynomial. 
\end{proposition}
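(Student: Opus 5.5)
The plan is to reduce the question to a polynomial-size linear program. The inner optimisation over the perturbed distributions is eliminated by LP duality, which is essentially the route of~\citep{DBLP:journals/ipl/ChenHK13}.

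\emph{Setup.} Fix an $\varepsilon$-approximated MDP $\mathcal{I}$ with nominal transition function $\trans$. For each state $\pos$ and action $\mov$, the well-defined valuations of the local variables form the polytope
\[
P(\pos,\mov)=\{p\in\mathbb{R}^{\setpos} : p\geq 0,\ \textstyle\sum_{\pos'}p(\pos')=1,\ |p(\pos')-\trans(\pos,\mov,\pos')|\leq\varepsilon\}.
\]
This polytope is described by polynomially many linear inequalities. The variables of different pairs $(\pos,\mov)$ are independent, so the uncertainty is rectangular.

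\emph{Memoryless optimality.} I view the joint choice of action and parameters as an MDP with compact polytopic action sets $\{(\mov,p) : p\in P(\pos,\mov)\}$. I will argue that the optimum is attained by a memoryless deterministic choice of an action and a vertex of $P(\pos,\mov)$ per state. The argument is that every vertex of $P(\pos,\mov)$ can be taken as a separate action, which yields a finite, exponentially large MDP with the same optimal value, where memoryless deterministic strategies suffice. Consequently, neither the memory bound $\compvar$ nor randomisation affects the optimum.

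\emph{Maximisation.} I first run a polynomial graph preprocessing that identifies the states $S_0$ from which the target $R$ cannot be reached under any choice. Here an edge $\pos\to\pos'$ is considered available whenever its upper bound $\trans(\pos,\mov,\pos')+\varepsilon$ is positive. I then set $x_{\pos}=1$ on $R$ and $x_{\pos}=0$ on $S_0$. The optimal values are the least solution of $x_{\pos}\geq \max_{p\in P(\pos,\mov)}p\cdot x$ for all $\mov$, obtained by minimising $\sum_{\pos} x_{\pos}$. The inner maximum is an LP in $p$ whose dual is a minimisation with dual variables $y_{\pos,\mov}$ (one per constraint of $P(\pos,\mov)$). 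By weak duality, $x_{\pos}\geq \max_{p} p\cdot x$ holds exactly when there exist dual-feasible $y_{\pos,\mov}$ whose dual objective is at most $x_{\pos}$. Both the objective and the feasibility conditions are linear in $(x,y)$ jointly, since $x$ appears only on the right-hand side of the dual constraints. This gives one LP of polynomial size, solvable in polynomial time.

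\emph{Minimisation.} This case is symmetric. I precompute the states where both choosers together can avoid $R$ surely, via an attractor computation in which an edge may be dropped whenever its lower bound $\trans(\pos,\mov,\pos')-\varepsilon$ is at most $0$. I then maximise $\sum_{\pos} x_{\pos}$ subject to $x_{\pos}\leq\min_{p\in P(\pos,\mov)}p\cdot x$ for every $\mov$. The inner minimum dualises to a maximum, so $x_{\pos}$ is bounded by it exactly when some dual-feasible witness has dual objective at least $x_{\pos}$. Again the whole problem is linear.

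\emph{Extracting the witnesses.} From an optimal $x^*$, I pick in each state an action attaining the optimum. I then solve the inner LP over $P(\pos,\mov)$ with objective $p\cdot x^*$ to obtain an optimal vertex $p$. This is polynomially many further LPs and yields the optimal strategy and valuation.

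\emph{Main obstacle.} The step I expect to be delicate is uniqueness of the fixpoint, i.e.\ correctness of the least (resp.\ greatest) solution. Without the preprocessing, end components where probability can circulate forever would give spurious solutions. Edges that can vanish when a lower bound reaches $0$ also change the graph structure. I would handle this by performing the qualitative analysis on the optimistic/pessimistic support as described above. I would then check that, after fixing $S_0$, the extracted memoryless choice yields a Markov chain in which $R\cup S_0$ is reached almost surely, so that the Bellman system has a unique solution. If this check fails, I would fall back on the standard argument for MDPs that the least fixpoint is the reachability value, applied to the finite vertex-MDP.
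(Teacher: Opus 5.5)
Your architecture is sound, and it differs from the cited route. The paper gives no proof of its own. The result of \citep{DBLP:journals/ipl/ChenHK13} is, as recalled in Section~\ref{sec:rw}, obtained with the ellipsoid method: roughly, one Bellman constraint per vertex of $P(\pos,\mov)$, handled through a polynomial separation oracle. You instead compress the inner optimisation by LP duality into one explicit polynomial-size LP, so the attribution ``essentially the route of'' that paper is not accurate. Your vertex-MDP argument also yields the UMC/IMDP coincidence that the paper imports as Proposition~\ref{prop:coincide}.

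There is, however, a genuine error in the qualitative preprocessing for minimisation. Declaring an edge droppable from its own lower bound $\trans(\pos,\mov,\pos')-\varepsilon\leq 0$ ignores that the dropped edges must vanish simultaneously while the remaining edges still carry mass $1$. Take a state $\pos$ with one action, nominal probability $0.4$ to each of two targets $t_1,t_2\in R$, $0.2$ on a self-loop, and $\varepsilon=0.4$. Each target edge is individually droppable, but dropping both forces $p(\pos)=1>0.6$. Under every valuation at least mass $0.4$ goes to $R$ in each step, so the minimal reachability probability from $\pos$ is $1$. Your attractor nevertheless puts $\pos$ into $S_0$, and the LP fixes $x_{\pos}=0$. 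Your almost-sure check cannot detect this, since states wrongly placed in $S_0$ trivially reach $R\cup S_0$.

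The correct computation is the greatest $Z\subseteq\setpos\setminus R$ such that every $\pos\in Z$ has an action $\mov$ for which the face $\{p\in P(\pos,\mov) : p(\pos')=0 \text{ for all } \pos'\notin Z\}$ is non-empty. Concretely, this means $\trans(\pos,\mov,\pos')\leq\varepsilon$ for all $\pos'\notin Z$ and $\sum_{\pos'\in Z}\min(1,\trans(\pos,\mov,\pos')+\varepsilon)\geq 1$. A non-empty face contains a vertex, so this is exactly sure avoidance in the vertex-MDP. No end component then lies outside $R\cup S_0$, and your uniqueness argument goes through. The asymmetry with maximisation is real: there, by convexity, the union of achievable supports is itself achievable, so your optimistic graph is correct.

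Two smaller points. First, the witness extraction for maximisation is incomplete. If $\pos$ has a self-loop action and an action reaching $R$ surely (e.g.\ at $\varepsilon=0$), both attain $x^*_{\pos}=1$, and choosing the loop yields probability $0$. Your fallback only certifies the values, not a witness. The standard repair has three steps:
\begin{itemize}
\item restrict to value-attaining actions;
\item use a point in the relative interior of the optimal face $\arg\max_{p\in P(\pos,\mov)}p\cdot x^*$, which is still optimal and has maximal support;
\item pick, by a backward attractor from $R$, choices that make progress.
\end{itemize}
Then $R$ or the closed set $\{x^*=0\}$ is reached almost surely, and $x^*$ is attained.

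Second, the step ``$x_{\pos}\geq\max_p p\cdot x$ iff some dual-feasible $y_{\pos,\mov}$ has objective at most $x_{\pos}$'' needs strong duality for the forward direction; weak duality gives only the converse. Strong duality holds because each $P(\pos,\mov)$ is a non-empty polytope, and the same applies on the minimisation side.
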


\checkreacheps*
\begin{proof}
    This result stems from using results of~\citep{DBLP:journals/ipl/ChenHK13}, where $\varepsilon$-approximated MDPs are similar to \emph{interval Markov chains under the IMDP semantics}, which are MDPs where at every step, the player chooses a probability distribution for the next transition, as long as it stays within some given bounds. Finding a strategy with maximal (resp. minimal) probability to reach a target can be done in polynomial time in such a system, as we recall in Proposition~\ref{prop:idtmc} below. Hence deciding whether a target can be reached with probability $d$ is also polynomial. 
    We can obtain a similar setting by separating states of our $\varepsilon$-approximated MDP between states where we choose the next action, without restriction since we are allowed to randomize the action we take, and states where we decide the probability distribution of the next state given the action that has been chosen, within the $\varepsilon$ bound. 
    Proposition 1 in~\citep{DBLP:journals/ipl/ChenHK13}, that we recall and rephrase as Proposition~\ref{prop:coincide} below, shows that choosing the transition probability at every step (the IMDP semantics) or fixing them from the start (that they call the UMC semantics) is equivalent for reachability properties, which allows us to use Proposition~\ref{prop:idtmc}. 
    Our problem is the dual of Proposition~\ref{prop:idtmc}, and so we can use it directly: on an $\varepsilon$-approximated MDP $\System $, all strategies satisfy for all valuations a reachability objective with probability $\bowtie d$ ($\bowtie\in\{<,>,\geq,\leq\}$) iff it does not hold that there exists a strategy and a valuation such that the same reachability objective holds with probability $\bar{\bowtie} d$, where $\bar{<}$ is $\geq$, $\bar{>}$ is $\leq$, $\bar{\geq}$ is $<$, and $\bar{\leq}$ is $>$. 
\end{proof}

\subsection{Proof of Theorem \ref{prop:check-reach-gen}}

\checkreachgen* 

\begin{proof}
 When the set of parameters is fixed and instantiated, we obtain an MDP. In an MDP if there exists a strategy that does not reach $R$, it can be assumed to be memoryless (by Lemma 10.102 of~\citep{DBLP:books/daglib/0020348}). Hence, to solve this problem, it is sufficient to decide if for all memoryless strategies and all well-defined parameters, $R$ is reached. This can be expressed using the universal theory of the reals. 
\end{proof}

\subsection{Proof of Lemma \ref{prop:rob-patl-unit}}

\robpatlunit*

\begin{proof}
We first guess a deterministic strategy for the coalition. Its size is polynomial in the memory bound fixed in the coalition operator. We fix this strategy, and the coalition is now empty: we obtain an $\varepsilon$-approximated MDP $\System'$ in which only agents that do not belong to the coalition remain. We use classical techniques from Propositions 5.1 and 5.2 of~\citep{AlurHK02} to transform until and release operators into reachability. Checking whether the agents in the \param-MDP $\System'$ have no way to falsify a reachability property consists in checking whether for all strategies and well-defined $\varepsilon$-perturbations with a bound $m$, this reachability objective holds. By Theorem~\ref{prop:check-reach-eps} this is polynomial, hence the problem is in $\NP$. 
\end{proof}

\subsection{Proof of Theorem \ref{prop:rob-patl}}

\robpatl* 
\begin{proof}

We start with membership, giving a way to robustly model check such a system and formula. 
We introduce Boolean variables $a_1,\ldots, a_n$ for each subformula $\coop{C_i}^{\bowtie d_i}\varphi_i$ of $\varphi$, with $i\in [1,n]$. In every subformula $\varphi_i$, we replace the cooperation operators by their respective propositional variable. Assuming we are given an $\NP$ oracle, we show that the problem is in $\NP$: indeed, we can guess whether every $a_1,\ldots, a_n$ holds, and verify the correctness of this guess in polynomial time using our $\NP$ oracle on  $\coop{C_i}^{\bowtie d}\varphi_i$ thanks to Lemma~\ref{prop:rob-patl-unit}.
\end{proof}

\subsection{Proof of Lemma \ref{prop:patl-bound-unit}}

\patlboundunit*

\begin{proof}
After having guessed a deterministic strategy for the coalition, and fixing this strategy, we obtain a \param-MDP. We again transform until and release operators into reachability using~\citep{AlurHK02}. Checking on this \param-MDP that for all strategies and a fixed amount of parameters, a reachability objective holds is in $\NPInter$ by Theorem~\ref{prop:check-reach-fixed}, hence the problem is in $\NP^{\NPInter}$ which is $\NP$~\citep{DBLP:journals/jcss/Schoning83}.
\end{proof}


\paragraph{Theorem \ref{prop:rob-patl-bound}.} 
The proof is obtained using Lemma \ref{prop:patl-bound-unit} the exact same as in Theorem  \ref{prop:rob-patl}, and we omit it.

\subsection{Proof of Theorem \ref{prop:rob-patl-gen}}

\robpatlgen* 
\begin{proof}
    Let $\varphi$ be a \PATL formula 
    and $\System = ( \setpos, \legal, X,  \trans, \val,(\funobs_{\ag})_{ \ag\in \Ag}, \allowbreak C)$ be a parametric system. For the $n$ cooperation subformulas $\coop{C_i}^{\bowtie d_i}\varphi_i$, we introduce Boolean variables $a_1,\ldots, a_n$ for $i\in [1,n]$. In every subformula $\varphi_i$, we replace cooperation operators by their respective propositional variable. Deciding if $\varphi$ holds is the same as deciding if there is a consistent valuation for these $a_1,\ldots, a_n$ and bounded-memory $\sigma_1,\ldots,\sigma_n$ strategies for each subformula, such that for all well-defined probability transitions $P_{T}$, in every subformula $i$, for all strategies $\mu_i$ of the agents not in the coalition formula, $\varphi_i$ holds iff $a_i$ is true. For the negative case, we duplicate it into formulas $\varphi'_i$, associated to $\sigma'_i$, $\mu'_i$ and to a specific $P_{T,i}$ witnessing the coalition cannot achieve $\varphi_i$. Since once bounded memory strategies and the transition probabilities are fixed, we have a classical MDP with threshold objective, we can freely assume (by Lemma 10.102 of~\citep{DBLP:books/daglib/0020348}) that strategies $\mu_i$ are deterministic and memoryless, and thus can be represented with a number of variables polynomial in $\System$. Formally, we consider formula $\exists a_1,\ldots, a_n. \bigwedge_{i\in[1,n]}(a_i\Rightarrow \exists \sigma_{i}. \forall P_{T}. \forall \mu_i. \varphi_i) \wedge (\neg a_i\Rightarrow \forall \sigma'_i. \exists P'_{T,i}. \exists \mu'_i. \varphi'_i)$ where every $\varphi_i$ with $i\in[1,n]$ states that when following strategies $\sigma_{i}$ and $\mu_i$ on the system $\System$ with transition probabilities $P_{T,i}$ within the $\varepsilon$-robustness, the valuation of each $\phi_i$ is $a_i$. Since none of the quantified variables are shared, we can write it as the following formula in $\Sigma_3^{\Reals}$:  
    \begin{align*}
    \exists a_1,\ldots, a_n. \exists \sigma_1,\ldots,\sigma_n. \forall P_{T}. \forall \mu_1,\ldots,\mu_n. \\ \forall \sigma'_1,\ldots,\sigma'_n.\exists P'_{T,1},\ldots, P'_{T,n}. \exists \mu'_1,\ldots \mu'_n. \\ \bigwedge_{i\in[1,n]}(a_i\Rightarrow  \varphi_i) \wedge (\neg a_i\Rightarrow   \varphi'_i)    
    \end{align*}
\end{proof}  
\fi


\end{document}
